\tikzstyle{small box}=[fill=white, draw=black, shape=rectangle, tikzit fill=white, tikzit draw=black, font={\small}]
\tikzstyle{large box}=[fill=white, draw=black, shape=rectangle, tikzit fill=white, tikzit draw=black, minimum height=15mm, minimum width=5mm]
\tikzstyle{dot}=[fill=black, shape=circle, minimum size=1.5mm, inner sep=0pt, tikzit shape=circle, tikzit fill=black, tikzit draw=black, draw=black]
\tikzstyle{pbox}=[fill=white, draw=black, shape=diamond, tikzit fill=white, tikzit draw=black, font={\small}, inner sep=1pt, tikzit shape=rectangle]
\tikzstyle{tiny box}=[fill=white, draw=black, shape=rectangle, tikzit fill=white, tikzit draw=black, font={\footnotesize}, inner sep=2pt]
\tikzstyle{whitedot}=[fill=white, draw=black, shape=circle, tikzit draw=black, tikzit fill=white, tikzit shape=circle, inner sep=0pt, minimum size=1.5mm]
\tikzstyle{pcirc}=[fill=white, draw=black, shape=circle, tikzit fill=white, tikzit draw=black, tikzit shape=circle, inner sep=1pt, font={\small}]
\tikzstyle{tiny}=[font={\footnotesize}]
\tikzstyle{pboxtiny}=[fill=white, draw=black, shape=diamond, tikzit fill=white, tikzit draw=black, tikzit shape=rectangle, inner sep=0.5pt, font={\scriptsize}]
\tikzstyle{actually_tiny}=[font={\tiny}]
\tikzstyle{ridm}=[-, decorate, decoration={{snake, segment length=1.5mm, amplitude=0.5mm}}]
\tikzstyle{arrow}=[->]
\newcommand{\tot}{\xrightarrow}
\newcommand{\dec}[1]{\ensuremath{\langle #1 \rangle}}
\newcommand{\ridm}[1]{\ensuremath{\overline{#1}}}
\newcommand{\iso}{\cong}
\newcommand{\entails}{\vDash}
\newcommand{\dle}{\sqsubseteq}
\newcommand{\codiag}{\nabla}
\newcommand{\fromm}{\xhookleftarrow}
\newcommand{\KT}{\ensuremath{\mathbf{K}_{\mathrm{3}}}}
\newcommand{\KTw}{\ensuremath{\mathbf{K}^{\mathrm{w}}_{\mathrm{3}}}}
\newcommand{\BT}{\ensuremath{\mathbf{B}_{\mathrm{3}}}}
\newcommand{\C}{\ensuremath{\mathscr{C}}}
\newcommand{\Pfn}{\ensuremath{\mathbf{Pfn}}}
\newcommand{\DMQLat}{\ensuremath{\mathbf{DMQLat}}}
\newcommand{\Poset}{\ensuremath{\mathbf{Poset}}}
\newcommand{\Set}{\ensuremath{\mathbf{Set}}}
\DeclareMathOperator{\id}{id}
\DeclareMathOperator{\Dec}{Dec}
\DeclareMathOperator{\TDec}{TDec}
\DeclareMathOperator{\Hom}{Hom}
\DeclareMathOperator{\Inv}{Inv}
\DeclareMathOperator{\Total}{Total}
\DeclareMathOperator{\Par}{Par}
\DeclareMathOperator{\Split}{Split}
\newcommand{\ie}{\emph{i.e.}}
\newcommand{\eg}{\emph{e.g.}}
\newif\ifappendix
\begin{document}

\begin{frontmatter}
  \title{Condition/Decision Duality and the \\ Internal Logic of Extensive 
  Restriction Categories}
  \author{Robin Kaarsgaard\thanksref{myemail}\thanksref{ALL}
  }
  \address{DIKU, Department of Computer Science\\ University of Copenhagen}
  \thanks[myemail]{Email: \href{mailto:robin@di.ku.dk}
  {\texttt{\normalshape robin@di.ku.dk}}}
  \thanks[ALL]{The author would like to thank Robert Glück for discussions
  relating to this paper, and to acknowledge the support given by 
  \emph{COST Action IC1405 Reversible computation: Extending horizons of
  computing}. The string diagrams and flowcharts in this paper were produced
  using \href{https://tikzit.github.io/}{\emph{TikZiT}}.}

  \begin{abstract}
  In flowchart languages, predicates play an interesting double role. In the
  textual representation, they are often presented as \emph{conditions}, \ie, 
  expressions which are easily combined with other conditions
  (often via Boolean combinators) to form new conditions, though they only play 
  a supporting role in aiding branching statements choose a branch to follow.
  On the other hand, in the graphical representation they are typically
  presented as \emph{decisions}, intrinsically capable of directing control
  flow yet mostly oblivious to Boolean combination.
  
  While categorical treatments of flowchart languages are abundant, none of 
  them provide a treatment of this dual nature of predicates.
  In the present paper, we argue that extensive restriction categories are 
  precisely categories that capture such a condition/decision duality, by 
  means of morphisms which, coincidentally, are also called decisions.
  Further, we show that having these categorical decisions amounts to having an
  internal logic: Analogous to how subobjects of an object in a topos form a
  Heyting algebra, we show that decisions on an object in an extensive
  restriction category form a \emph{De Morgan quasilattice}, the algebraic
  structure associated with the (three-valued) \emph{weak Kleene logic} \KTw. 
  Full classical propositional logic can be recovered by restricting to
  \emph{total} decisions, yielding extensive categories in the usual sense, and
  confirming (from a different direction) a result from effectus theory that
  predicates on objects in extensive categories form Boolean algebras.

  As an application, since (categorical) decisions are partial isomorphisms,
  this approach provides naturally reversible models of classical propositional 
  logic and weak Kleene logic.
  \end{abstract}
  \begin{keyword}
    categorical logic, flowchart languages, restriction
    categories, extensivity, weak Kleene logic
  \end{keyword}
\end{frontmatter}

\section{Introduction}
Flowchart languages are a particular class of imperative programming languages
which permit a pleasant and intuitive graphical representation of the control
flow of programs. While conceptually very simple, flowchart languages form the
foundation for modern imperative programming languages, and have been used for
this reason as vehicles for program analysis (\eg, to measure coverage in
white-box testing~\cite{Ammann2008}), program transformations (\eg, partial
evaluation, see \cite{JonesGomardSestoft1993}), and to express fundamental
properties of imperative programming, such as the equivalence of expressivity
in \emph{structured} and \emph{unstructured} programming in the
\emph{Böhm-Jacopini theorem}~\cite{BohmJacopini1966} (see also
\cite{AshcroftManna1972,WilliamsOssher1978}). Figure
\ref{fig:structured_flowchart} shows the (textual and graphical) flowchart
structures used by structured flowchart languages.

An interesting feature in flowchart languages is the dual presentation of
predicates as \emph{conditions} and \emph{decisions}, depending on the context.
On the one hand, the textual
$\mathbf{if}~p~\mathbf{then}~c_1~\mathbf{else}~c_2$ seems to favor the view of
$p$ as a condition, \ie, a predicate which has inherently nothing to do with
control flow, but which may easily be combined with other conditions to other
conditions to form new ones. In other words, the textual representation
considers the branching behaviour to be given by the semantics of
$\mathbf{if}~\dots~\mathbf{then}~\dots~\mathbf{else}~\dots$ rather than by the
semantics of $p$. This view is also emphasized by the usual (big-step)
operational semantics of conditionals: Here, predicates are treated as
expressions that may be evaluated in a state to yield a Boolean value, which
the conditional may then branch on, as in
\begin{equation*}
  \frac{\langle p, \sigma \rangle \to \mathbf{true} \qquad \langle
  c_1, \sigma \rangle \to \sigma'}{\langle
  \mathbf{if}~p~\mathbf{then}~c_1~\mathbf{else}~c_2, \sigma \rangle \to \sigma'}
  \qquad \text{ and } \qquad
  \frac{\langle p, \sigma \rangle \to \mathbf{false} \qquad \langle
  c_2, \sigma \rangle \to \sigma'}{\langle
  \mathbf{if}~p~\mathbf{then}~c_1~\mathbf{else}~c_2, \sigma \rangle \to \sigma'}
  \enspace.
\end{equation*}
On the other hand, the graphical representation of conditionals in
Figure~\ref{fig:structured_flowchart}(c) seems to rather prefer the view of
$p$ as a decision, \ie, a kind of flowchart operation intrinsically capable of
directing control flow. That is to say, that this is a structured flowchart
(corresponding to a conditional) is purely coincidental; for
\emph{unstructured} flowcharts to make sense, $p$ \emph{must} be able to direct
control flow on its own. However, where conditions are most naturally composed
via the Boolean combinators, the only natural way of composing decisions seems
to be in sequence (though this leads to additional output branches).

While categorical models of structured flowchart languages have been widely
studied (see, \eg, \cite{Stefanescu1986, Stefanescu1987, ManesArbib1980,
ManesArbib1986,Elgot1975,Elgot1976}), none provide a treatment of this dual
view of predicates. In this paper, we argue that extensive restriction
categories are precisely categories that make clear this dual view on
predicates as conditions and decisions, offering both the ease of combination
of conditions and the control flow behaviour of decisions. Restriction
categories (introduced in \cite{Cockett2002, Cockett2003, Cockett2007}) are
categories of partial maps, in which each morphism is equipped with a
\emph{restriction idempotent} that, in a certain sense, gauges how partial that
morphism is. Since models of flowchart languages most provide a notion of
partiality (due to possible nontermination), restriction categories provide an
ideal setting for such models. Coincidentally, the defining feature of
extensive restriction categories\footnote{Note that while extensive restriction
categories are strongly connected to extensive categories, they are confusingly
\emph{not} extensive in the usual sense of extensive
categories~\cite{Carboni1993}.} is the presence of certain morphisms called
\emph{decisions}, which play a similar role as the decision view on predicates
in flowchart languages.

In this setting, we show that the correspondence between conditions and
decisions is exhibited precisely as a natural isomorphism between the
\emph{predicate fibration} $\Hom(X,1+1)$ of predicates and predicate
transformers (see also \cite{CJWW2015,Jacobs2015}), and the \emph{decision
fibration} $\Dec(X)$ of decisions (certain morphisms $X \to X+X$) and decision
transformers. We then go on to explore the structure of $\Dec(X)$ (or
equivalently, $\Hom(X,1+1)$), showing that this extends to a fibration over the
category of \emph{De Morgan quasilattices} and homomorphisms, which give
algebraic semantics~\cite{FinnGrigolia1993} to Kleene's \emph{weak logic}
\KTw~\cite{Kleene1952}. Intuitively, \KTw{} can be seen as a partial version of
classical (Boolean) logic. We make this statement precise in this setting by
showing that if we restrict ourselves to \emph{total} decisions and decision
transformations, classical logic can be recovered. Since the subcategory of
objects and total morphisms of a (split) extensive restriction category is an
extensive category in the usual sense (see, \eg, \cite{Carboni1993}), we can
use this to provide an alternative proof of a statement from Effectus
theory~\cite{Jacobs2015,CJWW2015} that predicates over each extensive category forms a fibred Boolean algebra via the predicate fibration~\cite[Prop.~61,
Prop.~88]{CJWW2015}. This yields a relationship diagram of effecti, extensive
categories, and extensive restriction categories and their corresponding logics
as shown in Figure~\ref{fig:exteff}.

\begin{figure}
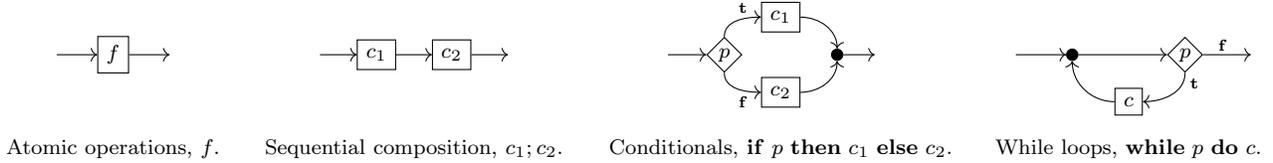

\ctikzfig{fc_struct}
\caption{The four flowchart structures.}
\label{fig:structured_flowchart}
\end{figure}

This paper is structured as follows: Section \ref{sec:background} gives a brief
introduction to extensive restriction categories. 
Section \ref{sec:condition_decision_duality} demonstrates the
condition/decision duality of extensive restriction categories by showing that
the decision and predicate fibrations are naturally isomorphic; and, as a
consequence, that \emph{decisions are a property of the predicates}. Then, in
Section~\ref{sec:the_internal_logic_of_extensive_restriction_categories}, we
show that the decisions on an object form models of \KTw, with decision
transformers as homomorphisms. By restricting to only \emph{total} decisions,
we show that these restrict to models of classical logic. Finally,
\ref{sec:conclusion} offers some concluding remarks.

\begin{figure}
  \centering
  \begin{tikzpicture}
  \node[draw,rounded corners] at (0.75,0) (ExtCat) {\begin{tabular}{c}
    \textbf{Extensive category} \\
    \emph{Boolean algebras}
  \end{tabular}};

  \node[draw,rounded corners] at (5.25,-4) (ExtRestCat) {\begin{tabular}{c}
    \textbf{Extensive restriction category} \\
    \emph{De Morgan quasilattices}
  \end{tabular}};

  \node[draw,rounded corners] at (-5.5,-4) (Effectus) {\begin{tabular}{c}
    \textbf{Effectus} \\
    \emph{Effect algebras}
  \end{tabular}};

  \draw[->] (Effectus) to node {} (ExtCat);
  \draw[->] (ExtRestCat) to node {} (ExtCat);
  \end{tikzpicture}
  \caption{Extensive categories, extensive restriction categories, and effecti: 
  their relationships and associated logics.}
  \label{fig:exteff}
\end{figure}
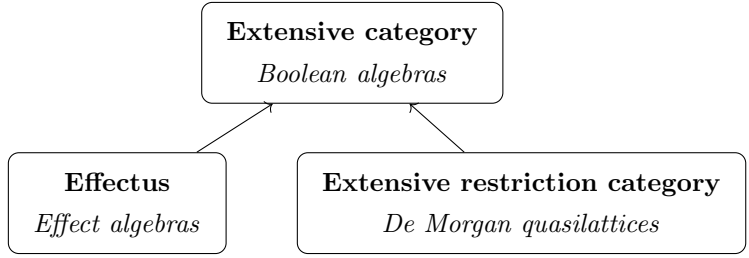

\section{Extensive restriction categories} 
\label{sec:background}
This section gives an introduction to extensive
restriction categories 
as it will be applied in the
sections that follow. The experienced reader may safely skip this section on a
first reading, and instead refer back to it as necessary.

Restriction categories are categories equipped with notions of partiality and
totality of morphisms. This is done by means of a \emph{restriction
combinator}, assigning to each morphism $X \tot{f} Y$ its \emph{restriction
idempotent} $X \tot{\ridm{f}} X$ (subject to certain laws) which may
intuitively be thought of as a partial identity defined precisely where $f$ is
defined. In this way, restriction categories provide an axiomatic (and
relatively light-weight) approach to partiality of morphisms in categories.
Formally, restriction categories are defined in the following way:

\begin{definition}
  A \emph{restriction structure} on a category consists of a combinator mapping 
  each morphism $f$ to its \emph{restriction idempotent} $\ridm{f}$, \ie
  \begin{equation*}
    \frac{X \tot{f} Y}{X \tot{\ridm{f}} X}
  \end{equation*}
  subject to the \emph{restriction laws}:
  \begin{multicols}{2}
  \begin{itemize}
    \setlength{\itemindent}{3em}
    \item[(R1)] $f \ridm{f} = f$ for all $X \tot{f} Y$,
    \item[(R2)] $\ridm{f} \ridm{g} = \ridm{g} \ridm{f}$ for all $X \tot{f} Y$
    and $X \tot{g} Z$,
    \item[(R3)] $\ridm{f\ridm{g}} = \ridm{f} \ridm{g}$ for all $X \tot{f} Y$
    and $X \tot{g} Z$, and
    \item[(R4)] $\ridm{g} f = f \ridm{gf}$ for all $X \tot{f} Y$ and $Y \tot{g}
    Z$.
  \end{itemize}
  \end{multicols}
  A category equipped with a restriction structure is called a 
  \emph{restriction category}.
\end{definition}

As the name suggests, a restriction structure is a structure on a category
rather than a property of it; in particular, a category can be equipped with
several different restriction structures. For this reason, we must in principle
specify which restriction structure we are using when speaking of a particular
category as a restriction category, though this is often omitted when the
restriction structure is implicitly given to be a canonical one.

Given that restriction categories are built on a foundation of idempotents, one would expect it to be occasionally useful when all such restriction idempotents split, and indeed this is the case. Say that restriction structure is \emph{split} when all restriction idempotents split, and let $\Split(\C)$ denote the category arising from the usual idempotent splitting (\ie, the \emph{Karoubi envelope}) of all restriction idempotents in \C. That $\Split(\C)$ is a restriction category when $\C$ is follows by \cite[Prop.~2.26]{Cockett2002}

As a canonical example,
the category \Pfn{} of sets and partial functions is a restriction
category, with the restriction idempotent $X \tot{\ridm{f}} X$ for $X \tot{f}
Y$ given by
\begin{equation*}
  \ridm{f}(x) = \left\{ \begin{array}{ll}
    x & \text{if $f$ is defined at $x$} \\
    \text{undefined}\enspace & \text{otherwise}
  \end{array}\right.
\end{equation*}
In a restriction category, say that a morphism $X \tot{f} Y$ is \emph{total} if
$\ridm{f} = \id_X$, and that it is a \emph{partial isomorphism} if there exists
$Y \tot{f^\dagger} X$ such that $f^\dagger f = \ridm{f}$ and $ff^\dagger =
\ridm{f^\dagger}$. Partial isomorphisms thus generalize ordinary isomorphisms,
as an isomorphism is then a partial isomorphism $X \tot{f} Y$ such that both
$f$ and $f^\dagger$ are total.

Since total morphisms are closed under composition and include all identities,
they form an important subcategory $\Total(\C)$ of any restriction category \C.
Likewise, partial isomorphisms are closed under composition and include all
identities, so all objects and partial isomorphisms of \C{} form the
subcategory $\Inv(\C)$. As the notation suggests, this category $\Inv(\C)$ is
not just a restriction category but an \emph{inverse category} (indeed, it is
the \emph{cofree} such~\cite{Kaarsgaard2017}) in the usual sense (see
\cite{Cockett2002,Kastl1979}).

A useful property of restriction categories is that they come with a natural
partial order on homsets (which extends to enrichment in \Poset) given by $f
\le g$ iff $g \ridm{f} = f$. Intuitively, this can be thought of as an
\emph{information order}; $f \le g$ if $g$ can do everything $f$ can do, and
possibly more.

Like any other categorical structure, when working in restriction categories we
require everything in sight to cooperate with the restriction structure. One of
the simplest examples of cooperation with restriction structure is given in the
definition of a restriction terminal object: This is simply a terminal object
$1$ in the usual sense, which further satisfies that the unique map $X \to
1$ is \emph{total} for all objects $X$. For coproducts, this means that we not
only require the restriction category to have all finite coproducts in the
usual sense, but also that the coproduct injections $X \tot{\kappa_1} X+Y$ and
$Y \tot{\kappa_2} X+Y$ are \emph{total}. In this case, we say that the
restriction category has \emph{restriction coproducts}. There is also a similar
notion of a \emph{restriction zero} object $0$: a zero object in the usual
sense which additionally satisfies that each zero endomorphism $X \tot{0_{X,X}}
X$ is its own restriction idempotent, \ie, that $\ridm{0_{X,X}} = 0_{X,X}$ (or
equivalently, that $\ridm{0_{X,Y}} = 0_{X,X}$ for all zero morphisms
$0_{X,Y}$). When zero morphisms exist, they serve as least element in their
homset with respect to the natural ordering, and when a category has
restriction coproducts and a restriction zero object, the restriction zero
object serves as unit for the restriction coproduct. When this is the case,
restriction coproduct injections are further partial isomorphisms (\eg, the partial inverse to $X \tot{\kappa_1} X+Y$ is $X+Y \tot{[\id,0]} X$).

Extensivity for restriction categories means that the restriction coproducts
are particularly well-behaved, in the sense that they admit a \emph{calculus of
matrices}~\cite{Cockett2007}. Concretely, this means that each morphism $X
\tot{f} Y+Z$ is associated with a unique morphism $X \tot{\dec{f}} X+X$, its
\emph{decision}, which, intuitively, makes the same branching choices as $f$
does, but doesn't do any actual work. Extensive restriction categories are
defined as follows.

\begin{definition}\label{def:extensive}
  A restriction category with restriction coproducts and a restriction zero is
  said to be an \emph{extensive restriction category} if each morphism $f$ has
  a unique \emph{decision} $\dec{f}$, \ie
  \begin{equation*}
    \frac{X \tot{f} Y+Z}{X \tot{\dec{f}} X+X}
  \end{equation*}
  satisfying the \emph{decision laws}
  \begin{multicols}{2}
  \begin{itemize}
    \setlength{\itemindent}{3em}
    \item[(D1)] $\codiag \dec{f} = \ridm{f}$
    \item[(D2)] $(f+f) \dec{f} = (\kappa_1 + \kappa_2) f$
  \end{itemize}
  \end{multicols}
  where $X+X \tot{\codiag} X$ is the natural codiagonal $[\id,\id]$.
\end{definition}
Note that extensive restriction categories are not extensive in the usual sense
-- rather, extensive restriction categories are the ``partial'' version of
extensive categories. This connection is made precise by the following proposition due to \cite{Cockett2007}.
\begin{proposition}
  Whenever \C{} is an extensive restriction category, $\Total(\Split(\C))$ is
  an extensive category.
\end{proposition}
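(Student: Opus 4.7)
My plan is to verify that $\Total(\Split(\C))$ satisfies the standard definition of extensive category in the sense of Carboni: it has finite coproducts with pullbacks of arbitrary maps along coproduct injections, and moreover in any diagram whose bottom row is a coproduct, the top row is a coproduct if and only if the two squares are pullbacks. I will proceed by first lifting the extensive restriction structure to the Karoubi envelope, then inheriting finite coproducts in the total subcategory, and finally constructing the required pullbacks from decisions.

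For the lifting, I observe that $\Split(\C)$ is itself an extensive restriction category: the restriction combinator, restriction coproducts, and restriction zero lift through idempotent splitting (the first being the result of Cockett and Lack already cited above), while for decisions one checks that a map $(X,e) \to (Y_1,e_1) + (Y_2,e_2)$ in $\Split(\C)$ is a morphism $f : X \to Y_1 + Y_2$ in $\C$ compatible with the three idempotents, and that $\dec{f}$ inherits the analogous compatibility via the decision laws. After this reduction I may assume restriction idempotents already split in $\C$ itself. Restriction coproducts then restrict to ordinary coproducts in $\Total(\C)$, since the injections $\kappa_i$ are total and the cotuple of two total maps is again total. The restriction zero $0$ serves as initial object: the unique arrow $0 \to X$ is the zero morphism $0_{0,X}$, whose restriction equals $0_{0,0} = \id_0$, forced by uniqueness of endomorphisms of $0$.

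The heart of the argument is the construction of pullbacks along coproduct injections. For a total $f : X \to Y_1 + Y_2$, the decision $\dec{f} : X \to X + X$ is itself total by (D1), since $\codiag \dec{f} = \ridm{f} = \id_X$. Letting $\pi_i : X + X \to X$ denote the partial inverses of $\kappa_i$, the composites $e_i = \pi_i \dec{f}$ are restriction idempotents on $X$; splitting them yields objects $X_i$ together with a coproduct decomposition $X \cong X_1 + X_2$ in $\Total(\Split(\C))$. The decision law (D2), which reads $(f+f)\dec{f} = (\kappa_1 + \kappa_2) f$, then witnesses that the restriction of $f$ along each inclusion $X_i \hookrightarrow X$ factors through $\kappa_i$ and that the resulting squares are pullbacks, while the converse half of the extensivity criterion follows from the uniqueness clause in the definition of the decision.

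The chief obstacle is this last step: verifying carefully that the splittings of $e_1$ and $e_2$ not only decompose $X$ as a coproduct but actually exhibit the two summands as pullbacks of $f$ along $\kappa_1, \kappa_2$ in $\Total(\Split(\C))$, and that this assignment is natural enough to yield the full biconditional in Carboni's criterion. The argument rests on delicate matrix calculations with the decision laws (D1), (D2) together with (R4), and it is precisely here that extensivity of the restriction structure — decisions as genuine data beyond the coproducts alone — carries the real weight of the proof.
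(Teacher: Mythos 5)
The paper never proves this proposition: it is imported directly from \cite{Cockett2007}, so there is no in-paper argument to compare against, and your sketch follows essentially the route of that cited source. Your outline is sound --- lift the structure to $\Split(\C)$, note that total maps inherit finite coproducts and that the restriction zero becomes initial, and manufacture pullbacks along injections by splitting the restriction idempotents $e_i = \kappa_i^\dagger\dec{f} = \ridm{\kappa_i^\dagger f}$ (Lemma~\ref{lem:utility}\eqref{lem:8}) --- but the two places you wave at are exactly where the content lives, and a complete proof must discharge them. First, splitting $e_1$ and $e_2$ does not by itself decompose $X$ as $X_1+X_2$; the clean argument is that $\dec{f}$ is a total partial isomorphism with $\dec{f}^\dagger\dec{f} = \ridm{\dec{f}} = \id_X$ and $\dec{f}\dec{f}^\dagger = \ridm{\dec{f}^\dagger} = e_1+e_2$ (Lemma~\ref{lem:utility}\eqref{lem:2}, \eqref{lem:2.5}, \eqref{lem:3}), so $X$ is itself a splitting of the idempotent $e_1+e_2$ on $X+X$, and idempotent splitting commutes with restriction coproducts, giving $X \cong X_1 + X_2$. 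Second, the deferred pullback verification is where (R4) does the work: $f\ridm{\kappa_1^\dagger f} = \ridm{\kappa_1^\dagger}f = \kappa_1\kappa_1^\dagger f$ shows that $f$ restricted to $X_1$ factors through $\kappa_1$, and for universality a total $g$ with $fg = \kappa_1 h$ satisfies $\ridm{\kappa_1^\dagger f}\,g = g\,\ridm{\kappa_1^\dagger f g} = g\,\ridm{h} = g$, so $g$ lands in $X_1$; the converse direction of Carboni's biconditional then does follow, as you say, from uniqueness of decisions. Nothing in your plan is a wrong turn, but as submitted the ``chief obstacle'' you name is the proposition itself, so the argument is an accurate roadmap rather than a proof.
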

A straightforward example of an extensive restriction category is \Pfn. Here,
the decision $X \tot{\dec{f}} X+X$ of a partial function $X \tot{f} Y+Z$ is
given by
\begin{equation*}
  \dec{f}(x) = \left\{ \begin{array}{ll}
    \kappa_1(x) & \text{if $f(x) = \kappa_1(y)$ for some $y \in Y$} \\
    \kappa_2(x) & \text{if $f(x) = \kappa_2(z)$ for some $z \in Z$} \\
    \text{undefined} & \text{if $f$ undefined at $x$}
  \end{array}\right.
\end{equation*}
For further examples and details on extensive restriction categories, see \cite{Cockett2007}.

\section{Condition/decision duality} 
\label{sec:condition_decision_duality}
Categorical models of flowcharts are categories with a notion of partiality
(due to possible nontermination) and coproducts (corresponding to the control
flows of the flowchart). As such, restriction categories with restriction
coproducts serve as a good starting point for these. We show in this section
that the additional requirement of \emph{extensivity} of the restriction
coproduct allows the category to exhibit a condition/decision duality,
analogous to the flowchart languages. This manifests in the category as a
natural isomorphism between the \emph{decisions} and \emph{predicates} over an
object (with their corresponding transformations).

We start with a few technical lemmas regarding the partial order on morphisms
in a restriction category as well as properties of decisions in extensive
restriction categories.

\begin{lemma}\label{lem:ridm}
  It is the case that
  \begin{multicols}{2}
  \begin{enumerate}[(i)]
    \item\label{lem:ridm:1} $g \le g'$ implies $\ridm{gf} \le \ridm{g'f}$,
    \item\label{lem:ridm:2} $\ridm{gf} \le \ridm{f}$,
    \item\label{lem:ridm:3} $f \le g$ implies $hf \le hg$ and $fh' \le gh'$,
    \item\label{lem:ridm:4} $f \le f'$ and $g \le g'$ iff $f+g \le f'+g'$.
    \item\label{lem:ridm:5} $f \le \ridm{g}$ implies $f = \ridm{f}$.
  \end{enumerate}
  \end{multicols}
\end{lemma}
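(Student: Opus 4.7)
My plan is to attack the five parts in order, using the restriction laws (R1)--(R4) together with two standing facts: restriction idempotents commute (R2), and $\ridm{\ridm{h}} = \ridm{h}$ (which follows from (R3) and idempotency). Throughout, unfolding $f \le g$ as $g\ridm{f} = f$ is the principal move.

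For (i), I would start from $g = g'\ridm{g}$, precompose with $f$, and apply (R4) in the form $\ridm{g}f = f\ridm{gf}$ to get $gf = g'f\ridm{gf}$; taking restrictions and applying (R3) yields $\ridm{gf} = \ridm{g'f}\ridm{gf}$, which is the desired inequality. For (ii), the observation $gf\ridm{f} = gf$ (from associativity and (R1)) followed by (R3) gives $\ridm{gf} = \ridm{gf}\ridm{f}$, which is $\ridm{gf} \le \ridm{f}$ after commuting with (R2). For (v), unfold $f \le \ridm{g}$ as $f = \ridm{g}\ridm{f}$, take restriction of both sides, and apply (R3) plus $\ridm{\ridm{g}} = \ridm{g}$ to obtain $\ridm{f} = \ridm{g}\ridm{f} = f$.

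For (iii), I handle the two subclaims in the ways best suited to each. For $hf \le hg$, write $hf = hg\ridm{f}$, take restriction via (R3) to get $\ridm{hf} = \ridm{hg}\ridm{f}$, and then compute $hg\ridm{hf} = hg\ridm{hg}\ridm{f} = hg\ridm{f} = hf$ using (R1). For $fh' \le gh'$, use (R4) to slide $\ridm{f}$ across $h'$: $fh' = g\ridm{f}h' = gh'\ridm{fh'}$.

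I expect (iv) to be the main obstacle, and my plan is to reduce it to (iii) via the coproduct universal property together with a handy consequence of (R4). First I would establish that $\ridm{[p,q]}\kappa_i = \kappa_i \ridm{p_i}$ (for $p_1 = p$, $p_2 = q$), which is immediate from (R4) since $[p,q]\kappa_i = p_i$. For the forward direction, apply (iii) to obtain $\kappa_1 f \le \kappa_1 f'$ and $\kappa_2 g \le \kappa_2 g'$, then verify $(f'+g')\ridm{f+g} = f+g$ by postcomposing each $\kappa_i$ and using the handy fact to reduce to $p\ridm{f} = p$ and $q\ridm{g} = q$. For the reverse direction, precompose $f+g \le f'+g'$ with $\kappa_i$, invoke (iii), and then peel off the injection using that restriction coproduct injections are both total (so $\ridm{\kappa_i f} = \ridm{f}$ from (ii) and (i)) and monic (because with a restriction zero the map $[\id,0]\kappa_1 = \id$ provides a retraction), which recovers $f \le f'$ and $g \le g'$.
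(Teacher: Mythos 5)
The paper states this lemma without proof---it is treated as routine and does not appear in the appendix---so there is nothing to compare your argument against except the restriction axioms themselves; measured against those, your proof is correct and complete. Parts (i), (ii), (iii) and (v) are exactly the standard unfoldings of $f \le g$ as $g\ridm{f} = f$ via (R1)--(R4), and part (iv), the only part with content beyond symbol-pushing, is handled correctly: the forward direction by checking both components after precomposition with the injections, using $\ridm{[p,q]}\kappa_i = \kappa_i\ridm{p_i}$ (your ``handy fact'', an instance of (R4)) to slide the restriction idempotent past $\kappa_i$; the reverse direction by using that the injections are split monic via $[\id,0]$, which is legitimate here since every category in which the lemma is applied has a restriction zero. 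Three cosmetic slips, none of which damages the argument: ``postcomposing each $\kappa_i$'' should read ``precomposing''; the reduction target in the forward direction of (iv) is $\kappa_1 f'\,\ridm{\kappa_1 f} = \kappa_1 f$, which follows from $\kappa_1 f \le \kappa_1 f'$ (i.e.\ from (iii) applied to the hypothesis), rather than ``$p\ridm{f} = p$'' as written; and $\ridm{\kappa_i f} = \ridm{f}$ for total $\kappa_i$ is not quite a consequence of (i) and (ii) alone---(ii) gives one inequality, but the converse needs the identity $\ridm{gf} = \ridm{\ridm{g}f}$, itself a short consequence of (R3) and (R4).
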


\begin{lemma}\label{lem:utility}
  Let $X \tot{f} Y+Z$ and $X' \tot{g} X$ be arbitrary morphisms of an extensive 
  restriction category, and $X \tot{\ridm{e}} X$ any restriction idempotent. It 
  is the case that
  \begin{multicols}{2}
    \begin{enumerate}[(i)]
    \item\label{lem:1} $\dec{\mkern-3mu\dec{f}\mkern-3mu} = \dec{f}$
    \item\label{lem:2} $\dec{f}$ is a partial isomorphism and $\dec{f}^\dagger 
    =  \left[\,\ridm{\kappa_1^\dagger f},\ridm{\kappa_2^\dagger f}\,\right]$
    \item\label{lem:2.5} $\ridm{\dec{f}^\dagger} = \ridm{\kappa_1^\dagger f} + 
    \ridm{\kappa_2^\dagger f}$
    \item\label{lem:3} $\ridm{\dec{f}} = \ridm{f}$
    \item\label{lem:4} $\gamma \dec{f} = \dec{\gamma f}$
    \item\label{lem:5} $\dec{\mkern-3mu\dec{f}g} = \dec{fg}$
    \item\label{lem:6lem} $(\ridm{e}+\ridm{e})\dec{f} = 
    (\ridm{e}+\ridm{e})\dec{f}\ridm{e}$
    \item\label{lem:6} $\dec{f}\ridm{e}$ is a decision and $\dec{f}\ridm{e} =
    (\ridm{e}+\ridm{e})\dec{f}$
    \item\label{lem:7} $\dec{f}\ridm{e} = \dec{f \ridm{e}}$
    \item\label{lem:8} $\kappa_i^\dagger \dec{f} = \ridm{\kappa_i^\dagger f}$
    \item\label{lem:9} $\dec{g}f = (f+f)\dec{gf}$
    \end{enumerate}
  \end{multicols}
\end{lemma}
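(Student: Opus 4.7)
The plan is to establish these eleven identities through combined use of the decision laws (D1), (D2), the restriction laws (R1)--(R4), and the uniqueness clause of Definition~\ref{def:extensive}, together with standard facts about restriction coproducts that follow from the basic setup: the injections $\kappa_i$ are partial isomorphisms, $\kappa_j^\dagger \kappa_i$ equals $\id$ when $i=j$ and the zero morphism otherwise, pre- and post-composition distribute appropriately across copairings, copairings of totals are total, and two maps into a coproduct are equal whenever their composites with each $\kappa_i^\dagger$ agree. I would prove the items in an order designed so each can freely use the earlier ones.

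I would start with (iv): because the codiagonal $\codiag = [\id, \id]$ is a copairing of totals and hence total, (R4) gives $\ridm{\codiag \dec{f}} = \ridm{\dec{f}}$, which combined with (D1) yields $\ridm{\dec{f}} = \ridm{f}$. Next (x), the technical workhorse: apply the partial inverse of the outer coproduct injection $\iota_i : Y+Z \to (Y+Z)+(Y+Z)$ to both sides of (D2) on the left and simplify using distribution of partial inverses across copairings and the orthogonality $\iota_j^\dagger \iota_i = \delta_{ij}$; this yields an identity of the form $f \cdot \kappa_i^\dagger \dec{f} = f \cdot \ridm{\kappa_i^\dagger f}$, and further manipulation using Lemma~\ref{lem:ridm} and the restriction laws upgrades this to the pointwise equality $\kappa_i^\dagger \dec{f} = \ridm{\kappa_i^\dagger f}$. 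With (x) in place, item (ii) is a direct check that $[\ridm{\kappa_1^\dagger f}, \ridm{\kappa_2^\dagger f}]$ satisfies both partial-isomorphism equations for $\dec{f}$, and (iii) follows by computing the restriction of that copairing and using that restriction idempotents are their own restrictions.

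Items (i), (v), (vi), (viii), and (ix) are then handled uniformly by exhibiting a candidate morphism and appealing to uniqueness of decisions: for (i), $\dec{f}$ itself satisfies both decision laws for $\dec{f}$ using (iv) and (x); for (v), $\gamma \dec{f}$ is verified as the decision of $\gamma f$; and so on. The key intermediate step for (viii) and (ix) is (vii), namely $(\ridm{e}+\ridm{e}) \dec{f} = (\ridm{e}+\ridm{e}) \dec{f} \ridm{e}$, proved by pre- and post-composing (D2) with $\ridm{e}$ and $(\ridm{e}+\ridm{e})$ and then shuffling restrictions using (R2) and (R4); once (vii) holds, one verifies that $\dec{f} \ridm{e}$ satisfies (D1) and (D2) for the morphism $f \ridm{e}$, giving (viii) and (ix) by uniqueness. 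Finally for (xi), I would proceed componentwise: by (x) and (R4), $\kappa_i^\dagger \dec{g} f = \ridm{\kappa_i^\dagger g} f = f \ridm{\kappa_i^\dagger g f}$; and by the relation $\kappa_i^\dagger (f+f) = f \kappa_i^\dagger$ together with (x) applied to $gf$, $\kappa_i^\dagger (f+f) \dec{gf} = f \kappa_i^\dagger \dec{gf} = f \ridm{\kappa_i^\dagger gf}$; equality then follows from the joint-projection property of the $\kappa_i^\dagger$.

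The main obstacle I anticipate is (x): the bookkeeping between the inner $X+X$ and outer $(Y+Z)+(Y+Z)$ coproduct injections and the simplification of nested copairings under partial inverses of injections is the most error-prone step, and since most of the remaining items depend on (x), it is the critical chokepoint of the argument.
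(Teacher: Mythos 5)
Your plan is sound, and much of it tracks the paper's proof closely: (iv) via totality of the codiagonal, (v)--(ix) by exhibiting candidate morphisms and invoking uniqueness of decisions, and (vii) by a restriction-idempotent computation. Where you genuinely diverge is in the dependency structure at the core. The paper imports (i) and (ii) from \cite{Cockett2007} and then reads off (x) in one line as the components of the copairing $\dec{f}^\dagger = \left[\,\ridm{\kappa_1^\dagger f},\ridm{\kappa_2^\dagger f}\,\right]$ (via uniqueness of partial inverses), whereas you prove (x) first, directly from (D2), and rebuild (ii) on top of it; likewise the paper proves (xi) by a single direct calculation seeded by $(\kappa_1^\dagger+\kappa_2^\dagger)(\kappa_1+\kappa_2)=\id$ and (D2) applied to $\dec{g}f$, where you argue componentwise. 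Your route buys a self-contained proof at the cost of concentrating the difficulty exactly where you predict.

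Two steps are thinner than they should be. First, in (x), once you have $f\,\kappa_i^\dagger\dec{f} = f\,\ridm{\kappa_i^\dagger f}$ you cannot simply ``cancel'' $f$ on the left, and taking restrictions of both sides only compares $\ridm{\kappa_i^\dagger\dec{f}}$ with $\ridm{\kappa_i^\dagger f}$. The missing ingredient is that $\kappa_i^\dagger\dec{f}$ is itself a restriction idempotent: since $\kappa_1^\dagger = [\id,0] \le [\id,\id] = \codiag$, Lemma~\ref{lem:ridm}\eqref{lem:ridm:3} and (D1) give $\kappa_i^\dagger\dec{f} \le \codiag\dec{f} = \ridm{f}$, whence $\kappa_i^\dagger\dec{f} = \ridm{\kappa_i^\dagger\dec{f}}$ by Lemma~\ref{lem:ridm}\eqref{lem:ridm:5}; only then does the restriction comparison close the argument. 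You do name Lemma~\ref{lem:ridm} as your toolbox, but this detour is essential, not bookkeeping. Second, the ``joint-projection property'' ($h=k$ whenever $\kappa_i^\dagger h = \kappa_i^\dagger k$ for $i=1,2$), on which you lean for (ii) and (xi), is not among the standard consequences of restriction coproducts alone; it requires extensivity, e.g.\ via the identity $h = [\kappa_1\kappa_1^\dagger h,\, \kappa_2\kappa_2^\dagger h]\dec{h}$ extracted from (D1)--(D2). The paper only records the special case for decisions (Lemma~\ref{lem:decrep}, proved through uniqueness of partial inverses), so you should state and prove the general form before using it. Neither issue is fatal, but both must be written out for the argument to stand.
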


A few of these identities were shown already in \cite{Cockett2007}; the rest
are mostly straightforward to derive. Note that a direct consequence of
\eqref{lem:1} is that $(\dec{f}+\dec{f})\dec{f} = (\kappa_1 +
\kappa_2)\dec{f}$; we will make heavy use of this fact in
Section~\ref{sec:the_internal_logic_of_extensive_restriction_categories}.
Another particularly useful identity is the following, stating intuitively that
anything that behaves as a decision in each component is, in fact, a decision.

\begin{lemma}\label{lem:decrep}
  If $\kappa_1^\dagger p = \ridm{\kappa_1^\dagger f}$ and $\kappa_2^\dagger p = 
  \ridm{\kappa_2^\dagger f}$ then $p = \dec{f}$.
\end{lemma}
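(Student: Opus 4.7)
The plan is to pin down $p$ by separately determining its decision $\dec{p}$ and its restriction idempotent $\ridm{p}$, showing each agrees with the corresponding datum of $\dec{f}$, and then reconstituting $p$ via (R1). This sidesteps a direct verification of the decision laws (D1) and (D2).

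First, I would establish $\dec{p} = \dec{f}$. Applying Lemma~\ref{lem:utility}(\ref{lem:2}) both to $p$ and to $f$ gives $\dec{p}^\dagger = [\ridm{\kappa_1^\dagger p}, \ridm{\kappa_2^\dagger p}]$ and $\dec{f}^\dagger = [\ridm{\kappa_1^\dagger f}, \ridm{\kappa_2^\dagger f}]$. The hypothesis makes each $\kappa_i^\dagger p$ equal to the restriction idempotent $\ridm{\kappa_i^\dagger f}$, so $\ridm{\kappa_i^\dagger p} = \ridm{\kappa_i^\dagger f}$ and the two copairings coincide. Uniqueness of partial inverses then yields $\dec{p} = \dec{f}$, from which Lemma~\ref{lem:utility}(\ref{lem:3}) applied on both sides gives $\ridm{p} = \ridm{f}$.

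Next I would compute $p\,\dec{f}^\dagger$ directly. Expanding $\dec{f}^\dagger$ via Lemma~\ref{lem:utility}(\ref{lem:2}), using the hypothesis to substitute $\ridm{\kappa_i^\dagger f}$ by $\kappa_i^\dagger p$, and distributing composition over the copairing gives $p\,\dec{f}^\dagger = [p\,\kappa_1^\dagger p,\, p\,\kappa_2^\dagger p]$. A single application of (R4), together with the fact that $\kappa_i^\dagger p$ is a restriction idempotent, shows $p\,\kappa_i^\dagger p = \kappa_i\kappa_i^\dagger p$, whence $p\,\dec{f}^\dagger = [\kappa_1\kappa_1^\dagger p,\, \kappa_2\kappa_2^\dagger p]$. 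Meanwhile, Lemma~\ref{lem:utility}(\ref{lem:2.5}) together with the definition of $+$ on morphisms presents $\ridm{\dec{f}^\dagger}$ in exactly the same form, so $p\,\dec{f}^\dagger = \ridm{\dec{f}^\dagger}$.

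To finish, I would right-compose this equation with $\dec{f}$. Simplifying the left side via $\dec{f}^\dagger \dec{f} = \ridm{f}$ and the right side via $\ridm{\dec{f}^\dagger}\dec{f} = \dec{f}\dec{f}^\dagger\dec{f} = \dec{f}\,\ridm{f} = \dec{f}$ (using (R1) at the last step) reduces the equation to $p\,\ridm{f} = \dec{f}$. Combined with $\ridm{p} = \ridm{f}$ from the first step, (R1) gives $p = p\,\ridm{p} = p\,\ridm{f} = \dec{f}$. The trickiest step is the (R4) computation in the third paragraph, where one must invoke the hypothesis in the form ``$\kappa_i^\dagger p$ is a restriction idempotent'' so that $\ridm{\kappa_i^\dagger p} = \kappa_i^\dagger p$ collapses the expression.
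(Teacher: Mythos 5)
Your proof is correct, but it takes a genuinely different route from the paper's. The paper argues in one step: from the hypotheses it derives $p^\dagger\kappa_1 = \ridm{\kappa_1^\dagger f}$ and $p^\dagger\kappa_2 = \ridm{\kappa_2^\dagger f}$, concludes $p^\dagger = [\,\ridm{\kappa_1^\dagger f},\ridm{\kappa_2^\dagger f}\,] = \dec{f}^\dagger$ by the coproduct's universal property, and finishes by unicity of partial inverses -- a much shorter argument, but one that works with $p^\dagger$ from the outset and so tacitly treats $p$ as a partial isomorphism. You instead never invert $p$: you first get $\dec{p} = \dec{f}$ (hence $\ridm{p} = \ridm{f}$ via Lemma~\ref{lem:utility}\eqref{lem:3}) by comparing the explicit copairing formulas for $\dec{p}^\dagger$ and $\dec{f}^\dagger$, then compute $p\,\dec{f}^\dagger = \ridm{\dec{f}^\dagger}$ via (R4) and $\ridm{\kappa_i^\dagger} = \kappa_i\kappa_i^\dagger$, and post-compose with $\dec{f}$ to land at $p\,\ridm{f} = \dec{f}$, closing with (R1). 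All your dagger manipulations are applied only to $\dec{f}$ and the injections, whose partial invertibility is guaranteed by Lemma~\ref{lem:utility}\eqref{lem:2} and the restriction-zero structure, so your argument buys a proof that is self-contained on this point at the cost of roughly three times the length; the paper's version buys brevity by leaning on the universal property and on $p^\dagger$ directly. I checked the delicate steps -- $\ridm{\kappa_i^\dagger p} = \kappa_i^\dagger p$ from the hypothesis, the (R4) collapse $p\,\kappa_i^\dagger p = \kappa_i\kappa_i^\dagger p$, and the identification of $[\kappa_1\kappa_1^\dagger p, \kappa_2\kappa_2^\dagger p]$ with $\ridm{\dec{f}^\dagger}$ via Lemma~\ref{lem:utility}\eqref{lem:2.5} -- and they all go through.
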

\begin{proof}
  By Lemma~\ref{lem:utility}\eqref{lem:2} $\dec{f}^\dagger =
  \left[\,\ridm{\kappa_1^\dagger f},\ridm{\kappa_2^\dagger f}\,\right]$. Since
  $\kappa_1^\dagger p = \ridm{\kappa_1^\dagger f}$ and
  $\kappa_2^\dagger p = \ridm{\kappa_2^\dagger f}$ it follows that
  $\ridm{\kappa_1^\dagger f} = (\ridm{\kappa_1^\dagger f})^\dagger =
  (\kappa_1^\dagger p)^\dagger = p^\dagger \kappa_1$ and
  $\ridm{\kappa_2^\dagger f} = (\ridm{\kappa_2^\dagger f})^\dagger =
  (\kappa_2^\dagger p)^\dagger = p^\dagger \kappa_2$
  so it follows by the universal mapping property of the coproduct that
  $p^\dagger = \left[\,\ridm{\kappa_1^\dagger f},\ridm{\kappa_2^\dagger
  f}\,\right] = \dec{f}^\dagger$, and finally $p = \dec{f}$ by unicity of
  partial inverses.
\end{proof}

As a corollary, $p$ is a decision if $\kappa_1^\dagger p$ and $\kappa_2^\dagger
p$ are both restriction idempotents (\ie, if $\kappa_1^\dagger p =
\ridm{\kappa_1^\dagger p}$ and $\kappa_2^\dagger p = \ridm{\kappa_2^\dagger
p}$) since all decisions decide themselves (\ie, since $\dec{\mkern-3mu\dec{p}\mkern-3mu} = \dec{p}$).

\begin{theorem}
  There is a functor $\C^{\text{op}} \tot{\Dec} \Set$ given by mapping objects
  to their decisions, and morphisms to decision transformers.
\end{theorem}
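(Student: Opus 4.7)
The plan is to identify $\Dec(X)$ as the set of decisions on $X$, namely the morphisms $p : X \to X+X$ for which there exists some $f : X \to Y+Z$ with $p = \dec{f}$. By Lemma~\ref{lem:utility}\eqref{lem:1}, this set coincides with $\{p : X \to X+X \mid \dec{p} = p\}$, which will be the more convenient characterization for the verification below.

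For the contravariant action on morphisms, given $g : X' \to X$ in $\C$, I would define the \emph{decision transformer} $\Dec(g) : \Dec(X) \to \Dec(X')$ by
\begin{equation*}
  \Dec(g)(p) \;=\; \dec{pg}.
\end{equation*}
This is well-defined on the nose: for $p \in \Dec(X)$, the composite $pg$ is a morphism $X' \to X+X$, so taking its decision produces a morphism $X' \to X'+X'$ which is by construction a decision, hence an element of $\Dec(X')$.

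It remains to verify the two functoriality axioms. For preservation of identities, I would compute
\begin{equation*}
  \Dec(\id_X)(p) \;=\; \dec{p \, \id_X} \;=\; \dec{p} \;=\; p,
\end{equation*}
where the last equality uses that $p \in \Dec(X)$ together with Lemma~\ref{lem:utility}\eqref{lem:1}. For contravariant preservation of composition, given $h : X'' \to X'$ and $g : X' \to X$, I would unfold both sides and appeal directly to Lemma~\ref{lem:utility}\eqref{lem:5}:
\begin{equation*}
  \Dec(h)\bigl(\Dec(g)(p)\bigr) \;=\; \Dec(h)\bigl(\dec{pg}\bigr) \;=\; \dec{\mkern-3mu\dec{pg}\, h} \;=\; \dec{pgh} \;=\; \Dec(gh)(p).
\end{equation*}

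The only real subtlety in the argument is pinning down the correct definition of decision transformer; once $\Dec(g)(p) := \dec{pg}$ is in hand, the proof reduces to two one-line invocations of Lemma~\ref{lem:utility}, so I do not expect any genuinely hard step. I would briefly remark (before stating the theorem's use elsewhere) that this also shows decisions decide themselves in a functorial way, foreshadowing the natural isomorphism with the predicate fibration established in the next result.
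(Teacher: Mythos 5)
Your proposal is correct and follows essentially the same route as the paper: the paper defines $\Dec(g)(\dec{p}) = \dec{\mkern-3mu\dec{p}g}$, which agrees with your $\dec{pg}$ on decisions by Lemma~\ref{lem:utility}\eqref{lem:5}, and both proofs reduce functoriality to the same two invocations of Lemma~\ref{lem:utility}\eqref{lem:1} and \eqref{lem:5}. No gaps.
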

\begin{proof}
  Define this functor by $\Dec(X) = \{\dec{p} \mid p
  \in \Hom(X,X+X)\}$ on objects, and by $\Dec(f : Y \to X)(\dec{p}) =
  \dec{\mkern-3mu\dec{p}f}$ on morphisms. This is contravariantly functorial 
  since $\Dec(\id_X)(\dec{p}) = \dec{\mkern-3mu\dec{p}\id_X} =
  \dec{\mkern-3mu\dec{p}\mkern-3mu} = \dec{p}$ by
  Lemma~\ref{lem:utility}\eqref{lem:1}, and $\Dec(gf)(\dec{p}) =
  \dec{\mkern-3mu\dec{p} gf} = \dec{\mkern-3mu\dec{\mkern-3mu\dec{p} g}f} =
  \Dec(f)(\Dec(g)(\dec{p}))$ by Lemma~\ref{lem:utility}\eqref{lem:5} and
  definition of $\Dec(f)$, as desired.
\end{proof}

From now on, we will use the notation $\Dec(Y) \tot{f^\diamond} \Dec(X)$ for 
the decision transformation $\Dec(f)$.

This is an example of a fibred category, which have historically been important
in categorical presentations of logic, \eg, in topoi (see \cite{Jacobs1999} for
a thorough treatment of indexed and fibred categories in categorical logic). In
Section~\ref{sec:the_internal_logic_of_extensive_restriction_categories}, we
will see that this indexed category extends beyond $\Set$ to a model of \KTw.
For now, it is sufficient to show the equivalence between conditions (morphisms
$X \to 1+1$) and decisions (morphisms $X \to X+X$ satisfying the \emph{decision
laws} of Definition~\ref{def:extensive}).

\begin{theorem}[Condition/decision duality]\label{thm:dec_cond_duality}
  Decisions and predicates are naturally isomorphic in any extensive 
  restriction category with a restriction terminal object: $\Dec(-) \iso 
  \Hom(-,1+1)\enspace$.
\end{theorem}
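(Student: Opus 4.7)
The plan is to define $\phi_X : \Dec(X) \to \Hom(X,1+1)$ by $\phi_X(\dec{p}) = (!_X + !_X)\dec{p}$, where $!_X : X \to 1$ is the unique total map, with candidate inverse $\psi_X : \Hom(X,1+1) \to \Dec(X)$ given by $\psi_X(q) = \dec{q}$, and then verify naturality together with both inverse equations. Naturality of $\psi$ is immediate from Lemma~\ref{lem:utility}\eqref{lem:5}. For naturality of $\phi$, I would apply Lemma~\ref{lem:utility}\eqref{lem:9} to $\dec{p}$ (using \eqref{lem:1} to collapse $\dec{\mkern-3mu\dec{p}\mkern-3mu} = \dec{p}$), obtaining $\dec{p}f = (f+f)\dec{\mkern-3mu\dec{p}f}$, pre-compose with $(!_X + !_X)$, and absorb $(!_X + !_X)(f+f) = !_Y + !_Y$ via uniqueness of total maps into the restriction terminal.

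For $\psi_X \circ \phi_X = \id$, I would apply Lemma~\ref{lem:decrep} to reduce the goal $\dec{(!+!)\dec{p}} = \dec{p}$ to checking $\kappa_i^\dagger\dec{p} = \ridm{\kappa_i^\dagger(!+!)\dec{p}}$ for $i = 1,2$. The left-hand side is $\ridm{\kappa_i^\dagger p}$ by Lemma~\ref{lem:utility}\eqref{lem:8}. For the right-hand side, a short coproduct calculation using $\kappa_i^\dagger\kappa_i = \id$ and $\kappa_i^\dagger\kappa_j = 0$ for $i \neq j$ (both consequences of restriction zero coproducts) yields $\kappa_i^\dagger(!_X + !_X) = !_X\kappa_i^\dagger$ as maps $X+X \to 1$, so $\kappa_i^\dagger(!+!)\dec{p} = !_X\ridm{\kappa_i^\dagger p}$ by \eqref{lem:8}; its restriction idempotent reduces via (R3) and $\ridm{!_X} = \id_X$ to $\ridm{\kappa_i^\dagger p}$, matching the left-hand side.

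The main obstacle is $\phi_X \circ \psi_X = \id$, that is, $(!+!)\dec{q} = q$ for every $q \in \Hom(X,1+1)$. My plan is to first derive a \emph{decomposition formula} $q = [\kappa_1\kappa_1^\dagger q,\,\kappa_2\kappa_2^\dagger q]\dec{q}$: begin with $q = q\ridm{q} = q\dec{q}^\dagger\dec{q}$ (by (R1), the partial-isomorphism identity $\dec{q}^\dagger\dec{q} = \ridm{\dec{q}}$, and Lemma~\ref{lem:utility}\eqref{lem:3}); substitute $\dec{q}^\dagger = [\ridm{\kappa_1^\dagger q}, \ridm{\kappa_2^\dagger q}]$ from \eqref{lem:2}; and absorb the identity $q\ridm{\kappa_i^\dagger q} = \kappa_i\kappa_i^\dagger q$, which is obtained by post-composing the decision law (D2) $(q+q)\dec{q} = (\kappa_1 + \kappa_2)q$ with the outer partial projection $(1+1)+(1+1) \to 1+1$. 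Applying the same formula to $(!+!)\dec{q}$ then reduces the goal to the coincidence of the decisions (already established above as $\dec{(!+!)\dec{q}} = \dec{q}$) and of the $\kappa_i^\dagger$-projections; the latter follows from the standard bijection $\Hom(X,1) \cong \{\text{restriction idempotents on }X\}$ given by $f \mapsto \ridm{f}$ with inverse $\ridm{e} \mapsto !_X\ridm{e}$, which identifies $\kappa_i^\dagger q = !_X\ridm{\kappa_i^\dagger q} = \kappa_i^\dagger(!+!)\dec{q}$. The technical heart of the proof thus lies in the decomposition formula and in this characterization of morphisms into the restriction terminal.
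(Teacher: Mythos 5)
Your proposal is correct, and although it uses the same pair of maps $\dec{p} \mapsto (!+!)\dec{p}$ and $q \mapsto \dec{q}$ (as any proof must), the verification of the two inverse laws takes a genuinely different route from the paper's. For $\dec{\mkern-2mu(!+!)\dec{p}\mkern-3mu} = \dec{p}$ the paper checks the decision laws (D1) and (D2) directly, whereas you go through Lemma~\ref{lem:decrep} together with $\kappa_i^\dagger\dec{p} = \ridm{\kappa_i^\dagger p}$ (Lemma~\ref{lem:utility}\eqref{lem:8}); both work, and yours is the more mechanical check. The real divergence is in the harder direction $(!+!)\dec{q} = q$: the paper proves the \emph{more general} identity $(!+!)f = (!+!)\dec{f}$ for arbitrary $X \tot{f} Y+Z$ by a single diagram chase from (D2) and terminality of $1$, and this general identity then also delivers naturality of $\phi$ for free; you instead derive the decomposition $q = \left[\kappa_1\kappa_1^\dagger q,\, \kappa_2\kappa_2^\dagger q\right]\dec{q}$ from $\dec{q}^\dagger\dec{q} = \ridm{q}$, Lemma~\ref{lem:utility}\eqref{lem:2}, and the component consequence $q\ridm{\kappa_i^\dagger q} = \kappa_i\kappa_i^\dagger q$ of (D2), and then compare the two sides componentwise --- which in turn obliges you to prove naturality of $\phi$ separately via Lemma~\ref{lem:utility}\eqref{lem:9}. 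What your route buys is a reusable structural fact (a predicate is recovered from its decision together with its two component domains, a partial-map analogue of the extensivity decomposition); what the paper's route buys is economy, since one lemma does double duty for the inverse law and for naturality. Two cosmetic points: the subscripts in ``$(!_X + !_X)(f+f) = {!_Y} + {!_Y}$'' are transposed (for $X \tot{f} Y$ it should read $(!_Y + !_Y)(f+f) = {!_X} + {!_X}$), and the identity $\kappa_i^\dagger q = {!_X}\,\ridm{\kappa_i^\dagger q}$ is exactly the defining property of the restriction terminal object applied to the map $\kappa_i^\dagger q \colon X \to 1$, so there is no need to invoke a separate bijection between $\Hom(X,1)$ and the restriction idempotents on $X$.
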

\begin{proof}
  Let \C{} be an extensive restriction category with a restriction terminal
  object, and $X$ some object of \C; we begin by showing that the mappings
  \begin{equation*}
    X \tot{\dec{f}} X+X\quad \mapsto\quad X\tot{\dec{f}} X+X \tot{!+!} 1+1 
    \qquad\text{and}\qquad
    X \tot{p} 1+1 \quad \mapsto \quad X \tot{\dec{p}} X+X 
  \end{equation*}
  between $\Dec(X)$ and $\Hom(X,1+1)$ yields a bijection. In other words, we 
  must show that $\dec{\mkern-2mu(!+!)\dec{f}\mkern-3mu} = \dec{f}$ and 
  $p = (!+!)\dec{p}$. To show $\dec{\mkern-2mu(!+!)\dec{f}\mkern-3mu} = 
  \dec{f}$, we show that $\dec{f}$ decides $(!+!)\dec{f}$ by
  $\nabla \dec{f} = \ridm{f} = \ridm{\dec{f}} = \ridm{(!+!)\dec{f}}$
  using the fact that the unique map $X \tot{!} 1$ is total by $1$ restriction 
  terminal, and by
    $(((!+!)\dec{f}) + ((!+!)\dec{f})) \dec{f} = ((!+!)+(!+!)) (\dec{f} + 
    \dec{f}) \dec{f}
    = ((!+!)+(!+!)) (\dec{f} + \dec{f}) \dec{\mkern-3mu\dec{f}\mkern-3mu}
    = ((!+!)+(!+!)) (\kappa_1 + \kappa_2) \dec{f}
    = (\kappa_1 + \kappa_2) (!+!) \dec{f}$.
  Thus $\dec{\mkern-2mu(!+!)\dec{f}\mkern-3mu} = \dec{f}$, as desired.
  
  To show that $p = (!+!)\dec{p}$ for $X \tot{p} 1+1$ we show something
  slightly more general, namely that $(!+!)f = (!+!)\dec{f}$ for any $X \tot{f}
  Y+Z$. That $p = (!+!)\dec{p}$ then follows as a special case since $\id_{1+1}
  = (!+!)$ by $1$ terminal, so $p = \id_{1+1} p = (!+!) p$. This slightly more 
  general statement follows by commutativity of the diagram below.
  \begin{center}
  \begin{tikzpicture}
  \node (X) {$X$};
  \node[right=25mm of X] (XX) {$X+X$};
  \node[below=12mm of X] (11') {$Y+Z$};
  \node[below=12mm of XX] (1111) {$(Y+Z)+(Y+Z)$};
  \node[below right=10mm of 1111] (11) {$1+1$};
  
  \node[below=5mm of X] (p1) {};
  \node[right=11mm of p1] (i) {\footnotesize\emph{(i)}};
  \node[below=5mm of 1111] (ii) {\footnotesize\emph{(ii)}};
  \node[above=2mm of 1111] (phantom) {};
  \node[right=8mm of phantom] (iii) {\footnotesize\emph{(iii)}};
  
  \draw[->,font={\small}] (X) to node [above] {$\dec{f}$} (XX);
  \draw[->,font={\small}, bend left] (XX) to node [right] {$!+!$} (11);
  \draw[->,font={\small}] (X) to node [left] {$f$} (11');
  \draw[->,font={\small}, bend right] (11') to node [below] {$!+!$} (11);
  \draw[->,font={\small}] (11') to node [below] {$\kappa_1 + \kappa_2$} 
  (1111);
  \draw[->,font={\small}] (XX) to node [right] {$f+f$} (1111);
  \draw[->,font={\small}] (1111) to node [above right] {$!+!$} (11);
  \end{tikzpicture}
  \end{center}
  Here, \emph{(i)} commutes by the second axiom of decisions, while 
  \emph{(ii)} and \emph{(iii)} both commute by $1$ terminal.
  
  To see that this bijection extends to a natural isomorphism, we must fix some
  $Y \tot{f} X$ and chase the diagram
  \begin{center}
  \begin{tikzpicture}
  \node (DecX) {$\Dec(X)$};
  \node[right=15mm of DecX] (HomX) {$\Hom(X,1+1)$};
  \node[below=10mm of DecX] (DecY) {$\Dec(Y)$};
  \node[below=10mm of HomX] (HomY) {$\Hom(Y,1+1)$};
  \draw[->,font={\small}] (DecX) to node [left] {$f^\diamond$} (DecY);
  \draw[<->,font={\small}] (DecX) to node [above] {$\iso$} (HomX);
  \draw[<->,font={\small}] (DecY) to node [below] {$\iso$} (HomY);
  \draw[->,font={\small}] (HomX) to node [right] {$f^*$} (HomY);
  \end{tikzpicture}
  \end{center}
  where we use $f^\diamond$ to denote the functorial action $\Dec(f)$,
  $\Dec(f)(\dec{p}) = \dec{\mkern-3mu\dec{p}f}$. Picking some $\dec{g} \in
  \Dec(X)$ we must have $(!+!) \dec{\mkern-3mu\dec{g} f} = (!+!) \dec{g} f$, 
  which indeed follows by the statement above. On the other hand, picking some 
  $p \in \Hom(X,1+1)$, chasing yields that we must have 
  $\dec{\mkern-3mu\dec{p}f} = \dec{pf}$, which follows directly by 
  Lemma~\ref{lem:utility} \eqref{lem:5}.
\end{proof}

A consequence of this equivalence in extensive restriction categories
is that decisions are a property of the \emph{predicates} rather than a
property of arbitrary maps, as it is commonly presented. This is shown in the following corollary to Theorem~\ref{thm:dec_cond_duality}.

\begin{corollary}
  A restriction category with restriction coproducts, a restriction zero, and a
  restriction terminal object has all decisions (\ie, is extensive as a
  restriction category) iff it has all decisions of predicates.
\end{corollary}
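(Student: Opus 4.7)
The forward implication is immediate: decisions of predicates $X \to 1+1$ are simply a special case of decisions. For the converse, suppose every predicate has a decision, and let $f : X \to Y+Z$ be arbitrary. Setting $p = (!+!)f : X \to 1+1$, the hypothesis supplies a decision $d = \dec{p}$. The plan is to show that $d$ also serves as a decision of $f$, verifying (D1) and (D2) of Definition~\ref{def:extensive}.

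Axiom (D1), namely $\codiag d = \ridm{f}$, reduces to the identity $\ridm{p} = \ridm{f}$: since $!$ is total (as $1$ is restriction terminal), so is $!+!$, and postcomposition with a total map preserves the restriction idempotent via a short computation combining Lemma~\ref{lem:ridm}\eqref{lem:ridm:2} with (R3) and (R4). For axiom (D2), namely $(f+f)d = (\kappa_1+\kappa_2)f$, the plan is as follows. Apply Lemma~\ref{lem:utility}\eqref{lem:2} to $d = \dec{p}$ to obtain $d^\dagger = \left[\,\ridm{\kappa_1^\dagger p},\,\ridm{\kappa_2^\dagger p}\,\right]$, then factor $\kappa_i^\dagger (!+!)$ through the respective $!$ (which is total) to simplify each $\ridm{\kappa_i^\dagger p}$ to $\ridm{\kappa_i^\dagger f}$; this yields $d^\dagger = \left[\,\ridm{\kappa_1^\dagger f},\,\ridm{\kappa_2^\dagger f}\,\right]$ and, via Lemma~\ref{lem:utility}\eqref{lem:2.5}, $\ridm{d^\dagger} = \ridm{\kappa_1^\dagger f} + \ridm{\kappa_2^\dagger f}$. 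Bifunctoriality of $+$ together with the identity $f\ridm{\kappa_i^\dagger f} = \kappa_i \kappa_i^\dagger f$ (which follows from (R4) and $\ridm{\kappa_i^\dagger} = \kappa_i \kappa_i^\dagger$) then establishes $(f+f)\ridm{d^\dagger} = (\kappa_1+\kappa_2)f\,d^\dagger$. Postcomposing with $d$ and using $d^\dagger d = \ridm{d} = \ridm{f}$, $\ridm{d^\dagger}d = d$, and (R1) collapses this to $(f+f)d = (\kappa_1+\kappa_2)f$, as required.

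Uniqueness of $d$ is then automatic from Lemma~\ref{lem:utility}\eqref{lem:2}: any two decisions of $f$ share the partial inverse $\left[\,\ridm{\kappa_1^\dagger f},\,\ridm{\kappa_2^\dagger f}\,\right]$ and so coincide by uniqueness of partial inverses. The only genuine obstacle is (D2); the subtlety is that the predicate $p = (!+!)f$ discards the values of $f$ while retaining its branching information, and the verification must exploit exactly this to lift a decision of $p$ to a decision of $f$.
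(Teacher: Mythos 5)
Your proposal is correct, and it rests on the same key construction as the paper's proof: take the decision $\dec{(!+!)f}$ of the associated predicate and show that it decides $f$. Where the two diverge is in how this is verified. The paper appeals to the proof of Theorem~\ref{thm:dec_cond_duality} via the chain $\dec{(!+!)f} = \dec{(!+!)\dec{f}\mkern-3mu} = \dec{f}$ --- a justification that mentions $\dec{f}$, the very morphism whose existence is at stake, and so only makes literal sense once the cited proof is unwound into a direct check. You instead verify the decision laws for $d = \dec{(!+!)f}$ from scratch: (D1) via $\ridm{(!+!)f} = \ridm{f}$ (totality of $!+!$), and (D2) via the explicit partial inverse $d^\dagger = \left[\,\ridm{\kappa_1^\dagger f},\,\ridm{\kappa_2^\dagger f}\,\right]$ obtained from Lemma~\ref{lem:utility}\eqref{lem:2} together with the identity $\kappa_i^\dagger(!+!) = \mathord{!}\,\kappa_i^\dagger$, followed by postcomposition with $d$ and $d^\dagger d = \ridm{f}$. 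This costs more computation but is entirely self-contained, never presupposes the existence of $\dec{f}$, and makes the uniqueness clause of Definition~\ref{def:extensive} explicit (via unicity of partial inverses), which the paper leaves implicit. Both arguments arrive at the same place; yours is the one I would keep if the corollary had to stand on its own.
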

\begin{proof}
  It follows directly that having decisions for all morphisms implies having
  decisions for all predicates. On the other hand, suppose that the category
  only has decisions for predicates, and let $X \tot{f} Y+Z$ be an arbitrary
  morphism. But then, by the proof of Theorem~\ref{thm:dec_cond_duality}, the
  decision for the predicate $X \tot{f} Y+Z \tot{!+!} 1+1$ decides $X \tot{f} 
  Y+Z$ (by $\dec{(!+!)f} = \dec{(!+!)\dec{f}\mkern-3mu} = \dec{f}$), and we are 
  done.
\end{proof}

\section{The internal logic of extensive restriction categories} 
\label{sec:the_internal_logic_of_extensive_restriction_categories}
Having established the natural isomorphism of decisions and predicates (with
their respective transformers) which forms the condition/decision duality at
the categorical level, we now turn to their structure. The main result of this
section, Theorem~\ref{thm:internal_logic}, shows that the decisions $\Dec(X)$
on an object $X$ form a model of \KTw, and that decision transformers $\Dec(Y)
\tot{f^\diamond} \Dec(X)$ are homomorphisms of these models. We first recall Kleene's three valued logics, in particular \KTw{} and its algebraic counterpart of De Morgan quasilattices.

\subsection{Kleene's three valued logics and De Morgan quasilattices} 
\label{sub:weak_kleene_logic_and_de_morgan_quasilattices} 
Kleene's three valued logics of \KT{} (\emph{strong Kleene logic}) and \KTw{} 
(\emph{weak Kleene logic}), both introduced in~\cite{Kleene1952}, are logics
based on \emph{partial predicates} with a computational interpretation:
Predicates are conceived of as programs which \emph{may not terminate}, but if
they do, they terminate with a Boolean truth value as output. In this way, both
\KT{} and \KTw{} can be thought of as partial versions of classical logic.
Here, possible nontermination is handled analogously to how it is handled in
domain theory, \ie, by the introduction of a third truth value in addition to
truth $t$ and falsehood $f$, denoted $u$ in Kleene's
presentation~\cite{Kleene1952}, which should be read as ``undefined''.

The difference between \KT{} and \KTw{} lies in how they cope with undefined
truth values. In \KTw{} (see Figure~\ref{fig:k3w_semantics}), undefinedness is
``contagious'': if any part of an expression is undefined, the truth value of
the entire expression is undefined as well\footnote{This contagious behaviour
has also been used to explain other phenomena. In philosophy,
\KTw{} is better known as \BT{} or \emph{Bochvar's nonsense logic} (see, \eg,
\cite{FinnGrigolia1993}), and the third truth value read as ``meaningless'' or
``nonsensical'' rather than ``undefined''. The central idea is that nonsense is
contagious: \eg, ``$2+2=5$ and gobbledygook'' is nonsensical even if part of it
can be assigned meaning.}. This fits well into a computation paradigm with
possible nontermination and only sequential composition available. In contrast,
the semantics of \KT{} is to try to recover definite truth values whenever
possible, even if part of the computation fails to terminate. For example, in
\KT{} (and unlike \KTw), $p \land q$ is considered false if one of $p$ and $q$ 
is false, even if the other is undefined. While this allows for some recovery
in the face of nontermination, computationally it seems to require parallel
processing capabilities.
\begin{figure}
\centering
\subfloat[Weak conjunction.]{
\begin{tabular}{cc|ccc}
  & $P$ & $t$ & $f$ & $u$ \\
  $Q$ & $\land$ & & & \\
  \hline
  $t$ & & $t$ & $f$ & $u$ \\ 
  $f$ & & $f$ & $f$ & $u$ \\ 
  $u$ & & $u$ & $u$ & $u$
\end{tabular}} \hspace{5em}
\subfloat[Weak disjunction.]{
\begin{tabular}{cc|ccc}
  & $P$ & $t$ & $f$ & $u$ \\
  $Q$ & $\lor$ & & & \\
  \hline
  $t$ & & $t$ & $t$ & $u$ \\ 
  $f$ & & $t$ & $f$ & $u$ \\ 
  $u$ & & $u$ & $u$ & $u$
\end{tabular}} \hspace{5em}
\subfloat[Negation.]{
\begin{tabular}{c|ccc}
  $P$ & $t$ & $f$ & $u$ \\
  \hline
  $\neg P$ & $f$ & $t$ & $u$
\end{tabular}}
\caption{The three-valued semantics of \KTw.}
\label{fig:k3w_semantics}
\end{figure}

Like classical logic takes its algebraic semantics in Boolean algebras, the
corresponding algebraic structure for \KTw{} is that of \emph{De Morgan
quasilattices} (see, \eg, \cite{FinnGrigolia1993}). As is sometimes done, we
assume these to be distributive; \ie, what we call De Morgan quasilattices are
sometimes called \emph{distributive De Morgan quasilattices} or even
\emph{(distributive) De Morgan bisemilattices} (see, \eg, \cite{Ledda2018}). Note that we generally do
\emph{not} require these to be bounded, \ie, for top and bottom elements $\top$
and $\bot$ to exist.

\begin{definition}\label{def:dmqlat}
  A \emph{De Morgan quasilattice} (in its algebraic formulation) is a quadruple
  $\mathfrak{A} = (|\mathfrak{A}|, \neg, \land, \lor)$ satisfying the following
  equations, for all $p,q,r \in |\mathfrak{A}|$:
  \begin{multicols}{2}
    \begin{enumerate}[(i)]
      \item $p \land p = p$, \label{def:con_idemp}
      \item $p \lor p = p$, \label{def:dis_idemp}
      \item $p \land q = q \land p$, \label{def:con_comm}
      \item $p \lor q = q \lor p$, \label{def:dis_comm}
      \item $p \land (q \land r) = (p \land q) \land r$, \label{def:con_assoc}
      \item $p \lor (q \lor r) = (p \lor q) \lor r$, \label{def:dis_assoc}
      \item $p \land (q \lor r) = (p \land q) \lor (p \land r)$, 
      \label{def:con_dist}
      \item $p \lor (q \land r) = (p \lor q) \land (p \lor r)$,
      \label{def:dis_dist}
      \item $\neg\neg p = p$, \label{def:nne}
      \item $\neg(p \land q) = \neg p \lor \neg q$, \label{def:dm1}
      \item $\neg(p \lor q) = \neg p \land \neg q$, \label{def:dm2}
    \end{enumerate}
  \end{multicols}
  Further, a De Morgan quasilattice $\mathfrak{A}$ is said to be \emph{bounded}
  if there exist elements $\bot, \top \in |\mathfrak{A}|$ such that the
  following are satisfied (for all $p \in |\mathfrak{A}|$):
  \begin{multicols}{2}
    \begin{enumerate}[(i)]\setcounter{enumi}{11}
      \item $p \land \top = p$, \label{def:con_unit} and
      \item $p \lor \bot = p$. \label{def:dis_unit}
    \end{enumerate}
  \end{multicols}
  A homomorphism $\mathfrak{A} \tot{h} \mathfrak{B}$ of De Morgan quasilattices
  is a function $|\mathfrak{A}| \to |\mathfrak{B}|$ which preserves $\neg$,
  $\land$, and $\lor$. A homomorphism of bounded De Morgan quasilattices is one
  which additionally preserves $\top$ and $\bot$.
\end{definition}
Being a De Morgan quasilattice is a strictly weaker property than being a
Boolean algebra. In particular, a Boolean algebra is a bounded De Morgan
quasilattice which further satisfies the \emph{absorption} laws $p =
p \land (p \lor q)$ and $p = p \lor (p \land q)$, and the laws of contradiction and \emph{tertium non datur}, $p \land \neg p = \bot$ and $p \lor \neg p = \top$.

De Morgan quasilattices and their homomorphisms form a category which we call
\DMQLat. As for Boolean algebras, one can derive a partial order on De Morgan
quasilattices by $p \preccurlyeq q$ iff $p \land q = p$, and another one by $p
\dle q$ iff $p \lor q = q$. Unlike as for Boolean algebras, however, these do
\emph{not} coincide, though they are anti-isomorphic, as it follows from the De
Morgan laws that $p \preccurlyeq q$ iff $\neg q \dle \neg p$. We will return to
these in
Section~\ref{sec:the_internal_logic_of_extensive_restriction_categories} and
argue why $\cdot \preccurlyeq \cdot$ is the one more suitable as the entailment
relation for \KTw. 

\subsection{The internal logic} 
\label{sub:the_internal_logic}
With \KTw{} and De Morgan quasilattices introduced, we return to the construction of the internal logic. To aid in its presentation (and
subsequent proofs), we start by introducing a graphical language of extensive
restriction categories, based on the one for cocartesian categories (see,
\eg, \cite{Selinger2011}). Then, we show how the constants and connectives of
\KTw{} can be interpreted (Definition~\ref{def:internal_log}) as decisions on
an object (Lemma~\ref{lem:condec}). Finally, we show that decisions on an
object form a model of \KTw{} (Lemma~\ref{lem:dec_dmq}), and that decision
transformations are homomorphisms of these models
(Lemma~\ref{lem:f_diam_homo}), concluding this construction. We go on to
explore an important corollary to this construction, namely that if we restrict
ourselves from ordinary decisions to total decisions and total decision
transformations, we obtain a fibration over Boolean algebras instead
(Corollary~\ref{cor:tdec} Theorem~\ref{thm:ext_bool}). The latter is a
well-known property of extensive categories first shown in \cite{CJWW2015},
though this proof uses entirely different machinery.

Figure~\ref{fig:graphical_lang} shows the graphical language of extensive
restriction categories, which has the restriction coproduct as its monoidal tensor. The first five gadgets are from cocartesian categories
($\gamma_{X,Y}$ is here the twist map, $[\kappa_2, \kappa_1]$). We add gadgets
corresponding to decisions $X \tot{\dec{f}} X+X$, inverses to decisions $X+X
\tot{\dec{f}^\dagger} X$ (as all decisions are partial isomorphisms, see
Lemma~\ref{lem:utility}\eqref{lem:2}), and restriction idempotents $X
\tot{\ridm{f}} X$. The gadget for inverses to decisions was inspired by \emph{assertions} in reversible flowcharts (see~\cite{YokoyamaAG2016}). Useful derived gadgets include
\ctikzfig{graphical_derived}
Just as the graphical language of cocartesian categories, isomorphism or
isotopy of diagrams is not enough for coherence -- equations only hold in the
graphical language up to diagrammatic manipulations corresponding to the
decision laws, as well as the diagrammatic manipulations for coproducts (\eg,
the commutative monoid axioms and naturality for the codiagonal, the zero morphism laws, etc.). For
more on the latter, see \cite{Selinger2011}. For example, graphically, the
decision laws are
\ctikzfig{decision_laws}
As in the example above, when the signature is clear from the context, we omit
the object annotations (\eg, $X, Y, Z$ in Figure~\ref{fig:graphical_lang}).

\begin{figure}
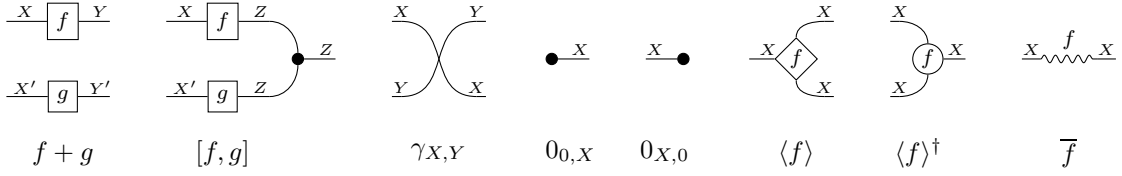

\ctikzfig{graphical_language}
\caption{An overview of the gadgets that make up the graphical language of
extensive restriction categories.}
\label{fig:graphical_lang}
\end{figure}

With the graphical language in place, we proceed to give the definition of the
internal logic of decisions in an extensive restriction category, \ie, the
entailment relation and construction of constants and propositional connectives.

\begin{definition}\label{def:internal_log}
  In an extensive restriction category, propositional constants and connectives 
  are defined for decisions as follows, using the graphical language:
  \ctikzfig{logic}
  Entailment is defined by $\dec{p} \entails \dec{q}$ iff $\dec{p}
  \preccurlyeq \dec{q}$ (explicitly, iff $\dec{p} \land \dec{q} = \dec{p}$). 
\end{definition}

For those more textually inclined, this defines $\top = \kappa_1$, $\bot =
\kappa_2$, $\neg \dec{p} = \gamma \dec{p}$, $\dec{p} \lor \dec{q} =
(\dec{p}^\dagger + \id) \alpha (\ridm{\dec{q}} + \dec{q})\dec{p}$, and $\dec{p}
\land \dec{q} = (\id + \dec{p}^\dagger)\alpha(\dec{q} + \ridm{\dec{q}})\dec{p}$.

Intuitively, we think of decisions as representing partial predicates by separating values into \emph{witnesses} and \emph{counterexamples} of that
partial predicate (see also \cite{KaarsgaardGlueck2018}). The definitions of
$\top$ and $\bot$ express the convention that the first component carries
witnesses, while the second component carries counterexamples. Negation of
partial predicates then amounts to swapping witnesses for counterexamples and
vice versa, \ie, by composing with the symmetry. The intuition behind
conjunction (and, dually, disjunction) is less obvious: Using the intuition of
decisions as morphisms that tag inputs with a branch but doesn't change it
otherwise, we see that a witness of $\dec{p} \land \dec{q}$ has to be a witness
of both $\dec{p}$ and $\dec{q}$, while a counterexample of $\dec{p} \land
\dec{q}$ is either a counterexample of $\dec{p}$ which is further defined for
$\dec{q}$ (necessary to ensure commutativity), or a witness of $\dec{p}$ which
is a counterexample of $\dec{q}$. The case for disjunctions is dual.

Before we move on to show that this actually has the logical structure we're
after, we first obliged to show that these connectives and constants actually define well-formed decisions. This fact is expressed in the following lemma. 
\begin{lemma}\label{lem:condec}
  The constants and connectives of Definition~\ref{def:internal_log} are 
  decisions.
\end{lemma}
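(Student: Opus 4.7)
The plan is to apply the corollary to Lemma~\ref{lem:decrep} noted in the text: a morphism $p : X \to X+X$ is a decision as soon as $\kappa_1^\dagger p$ and $\kappa_2^\dagger p$ are restriction idempotents (since decisions decide themselves). So for each constant and connective in Definition~\ref{def:internal_log} I will post-compose with $\kappa_1^\dagger$ and $\kappa_2^\dagger$ and show that the result simplifies to a restriction idempotent, leaning on Lemma~\ref{lem:utility}\eqref{lem:8} — which states $\kappa_i^\dagger \dec{f} = \ridm{\kappa_i^\dagger f}$ — as the fundamental tool.

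The cases for $\top = \kappa_1$ and $\bot = \kappa_2$ are essentially immediate: since $\kappa_1^\dagger \kappa_1 = \id$ and $\kappa_1^\dagger \kappa_2 = 0$ (and symmetrically for $\kappa_2^\dagger$), both components yield either $\id$ or a zero, each of which is a restriction idempotent (the latter because the zero is the restriction of itself). For $\neg \dec{p} = \gamma \dec{p}$, the twist satisfies $\kappa_i^\dagger \gamma = \kappa_{3-i}^\dagger$, so $\kappa_i^\dagger(\gamma \dec{p}) = \kappa_{3-i}^\dagger \dec{p} = \ridm{\kappa_{3-i}^\dagger p}$ by Lemma~\ref{lem:utility}\eqref{lem:8}, which is a restriction idempotent by construction.

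The real work lies in the conjunction $\dec{p} \land \dec{q} = (\id + \dec{p}^\dagger) \alpha (\dec{q} + \ridm{\dec{q}}) \dec{p}$ (and disjunction, which is dual up to the symmetry). Here I would switch to the graphical language, where the gadgets for decisions, partial inverses to decisions, and restriction idempotents can be slid past one another using the decision laws and the coproduct equations. The strategy is to post-compose with $\kappa_1^\dagger$ (diagrammatically: cap off the first output with the appropriate projection) and use that $\kappa_1^\dagger (\id + \dec{p}^\dagger) = \kappa_1^\dagger$ on the leftmost $X$-summand and annihilates the $X+X$-summand on the right, so that $\kappa_1^\dagger(\dec{p} \land \dec{q})$ reduces to $\kappa_1^\dagger \dec{q} \cdot \kappa_1^\dagger \dec{p} = \ridm{\kappa_1^\dagger q} \cdot \ridm{\kappa_1^\dagger p}$, which is a product of two restriction idempotents and hence a restriction idempotent by (R2). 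Dually, $\kappa_2^\dagger(\dec{p} \land \dec{q})$ decomposes through $\dec{p}^\dagger$ and $\ridm{\dec{q}}$ to yield a cotuple/sum of compositions of restriction idempotents, which again forms a restriction idempotent by (R2) and (R3) together with the idempotency of splittings across coproduct injections.

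The main obstacle will be keeping the bookkeeping straight in this last step: both $\dec{p}^\dagger$ and the associator $\alpha$ need to be eliminated correctly when $\kappa_i^\dagger$ is pushed past them, and in particular one must verify that the $\ridm{\dec{q}}$ factor in the second summand of $(\dec{q} + \ridm{\dec{q}})$ is exactly what is needed to make the resulting composite idempotent (rather than merely split-idempotent or partial). The graphical calculus makes this essentially a matter of slide and cap manipulations, but one has to be careful to invoke Lemma~\ref{lem:utility}\eqref{lem:2.5} when the inverse $\dec{p}^\dagger$ produces a sum of restriction idempotents, and to use (R4) to commute the resulting idempotents into a single composite form.
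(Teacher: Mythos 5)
Your proposal follows essentially the same route as the paper's proof: both reduce to the corollary of Lemma~\ref{lem:decrep}, compute $\kappa_1^\dagger(\dec{p}\land\dec{q})$ down to the product of restriction idempotents $\ridm{\kappa_1^\dagger q}\,\ridm{\kappa_1^\dagger p}$, and then handle the $\kappa_2^\dagger$-component and the dual case for disjunction. The only (inessential) divergence is in closing the $\kappa_2^\dagger$ case: the paper bounds $\kappa_2^\dagger(\dec{p}\land\dec{q}) \le \ridm{p}\,\ridm{q}$ and invokes Lemma~\ref{lem:ridm}\eqref{lem:ridm:5}, whereas you propose commuting the conjugation $\dec{p}^\dagger(\cdots)\dec{p}$ into a product of restriction idempotents via (R4), which also works.
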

\ifappendix
\begin{proof}
  See appendix.
\end{proof}
\fi
Before we can proceed, we need a small technical lemma.
\begin{lemma}\label{lem:dec_misc}
  Let \begin{minipage}[m]{8mm}\tikzfig{pdec}\end{minipage} and 
  \begin{minipage}[m]{8mm}\tikzfig{qdec}\end{minipage} be decisions. It is the 
  case that
  \begin{multicols}{2}
  \begin{enumerate}[(i)]
    \item \begin{minipage}[m]{\linewidth}\tikzfig{commstmt}\end{minipage} 
    \label{lem:commstmt}
    \item \begin{minipage}[m]{\linewidth}\tikzfig{con_irrev}\end{minipage} 
    \label{lem:con_irrev}
    \item \begin{minipage}[m]{\linewidth}\tikzfig{dis_irrev}\end{minipage} 
    \label{lem:dis_irrev}
  \end{enumerate}
  \end{multicols}  
\end{lemma}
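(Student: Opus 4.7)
The plan is to establish each of the three identities by translating the graphical equations into the algebraic form afforded by Definition~\ref{def:internal_log} and then reducing everything to manipulations governed by Lemma~\ref{lem:utility}. Since the statements are all equalities of morphisms built entirely out of the gadgets of the graphical language, the strategy is first to extract the underlying algebraic equation from each diagram, then rewrite one side into the other by repeated application of the restriction laws (R1)--(R4) and the decision identities.

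For part \eqref{lem:commstmt}, the natural approach is to exploit Lemma~\ref{lem:utility}\eqref{lem:6lem} and \eqref{lem:7}, which express how restriction idempotents can be pulled through a decision on either side: $(\ridm{e}+\ridm{e})\dec{f} = \dec{f}\ridm{e} = \dec{f\ridm{e}}$. Combined with the fact that $\ridm{\dec{p}} = \ridm{p}$ and $\ridm{\dec{q}} = \ridm{q}$ (Lemma~\ref{lem:utility}\eqref{lem:3}), and that any two restriction idempotents commute by (R2), the two sides of the commutativity diagram should be rewritten to the common form $(\ridm{p}+\ridm{p})\dec{q} = (\ridm{q}+\ridm{q})\dec{p}$, or an obvious consequence thereof.

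For parts \eqref{lem:con_irrev} and \eqref{lem:dis_irrev}, the key observation is that conjunction and disjunction are defined essentially as $\dec{p}$ followed by a branched insertion of $\dec{q}$ and its partial inverse; composing with $\dec{p}^\dagger$ (respectively $\dec{q}^\dagger$) collapses the branch back, and what remains after applying Lemma~\ref{lem:utility}\eqref{lem:2} (giving $\dec{f}^\dagger = [\ridm{\kappa_1^\dagger f},\ridm{\kappa_2^\dagger f}]$) and Lemma~\ref{lem:utility}\eqref{lem:2.5} should be a pure composite of restriction idempotents, which by (R2), (R3), and Lemma~\ref{lem:ridm}\eqref{lem:ridm:5} reduces to the diagram on the right-hand side. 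Since the definitions of $\land$ and $\lor$ are related by exchanging the roles of the two coproduct components (equivalently, by conjugation with $\gamma$), once one of \eqref{lem:con_irrev} or \eqref{lem:dis_irrev} is established, its dual follows by applying negation on both sides and invoking the De Morgan style interchange built into the definitions.

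The main obstacle will be part \eqref{lem:con_irrev} (and hence, by duality, \eqref{lem:dis_irrev}): one must carefully trace how $\dec{p}^\dagger$ interacts with the inserted copy of $\dec{q}$ and with the restriction idempotent $\ridm{\dec{q}}$ sitting on the opposite branch, keeping the associator $\alpha$ in order and avoiding subtle sign-of-branch errors. The cleanest way to manage this bookkeeping is graphically -- pulling $\dec{p}^\dagger$ across the $+$ so that it meets $\dec{p}$, collapsing $\dec{p}^\dagger \dec{p} = \ridm{\dec{p}} = \ridm{p}$, and then using Lemma~\ref{lem:utility}\eqref{lem:6lem} to push the resulting restriction idempotent through $\dec{q}$ -- rather than symbolically. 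Throughout, appeals to Lemma~\ref{lem:decrep} may be needed to certify that intermediate morphisms remain decisions, so that the gadgets of the graphical language continue to apply.
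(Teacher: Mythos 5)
Your treatment of parts \eqref{lem:con_irrev} and \eqref{lem:dis_irrev} is on the right track: the paper's own proof of \eqref{lem:con_irrev} is exactly the kind of direct computation you describe (collapse $\dec{p}^\dagger\dec{p}$ to $\ridm{\dec{p}}$, use $\dec{f}^\dagger = [\,\ridm{\kappa_1^\dagger f},\ridm{\kappa_2^\dagger f}\,]$, and absorb the resulting restriction idempotents), and \eqref{lem:dis_irrev} is indeed obtained by the mirror-image argument. But your plan for part \eqref{lem:commstmt} has a genuine gap. The identity you propose to reduce to, $(\ridm{p}+\ridm{p})\dec{q} = (\ridm{q}+\ridm{q})\dec{p}$, is equivalent via Lemma~\ref{lem:utility}\eqref{lem:6} to $\dec{q}\,\ridm{\dec{p}} = \dec{p}\,\ridm{\dec{q}}$, which is false in general: already in \Pfn, for total predicates with $\dec{p}=\kappa_1$ and $\dec{q}=\kappa_2$ the two sides are $\kappa_2$ and $\kappa_1$. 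The commutativity statement in \eqref{lem:commstmt} is an interchange of the two \emph{decisions} themselves --- $(\dec{q}+\dec{q})\dec{p}$ against $(\dec{p}+\dec{p})\dec{q}$ up to the evident permutation of the four summands --- and the machinery you invoke, Lemma~\ref{lem:utility}\eqref{lem:6lem} and \eqref{lem:7}, only lets you slide \emph{restriction idempotents} through a decision; it cannot move one decision past another, since a decision carries branching data that a restriction idempotent does not.

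The idea you are missing is the one the paper uses for \eqref{lem:commstmt}: both sides are composites of partial isomorphisms (decisions are partial isomorphisms by Lemma~\ref{lem:utility}\eqref{lem:2}, as are coproducts of them and the coherence permutations), so by uniqueness of partial inverses it suffices to show that the two partial inverses agree. By Lemma~\ref{lem:utility}\eqref{lem:2} each partial inverse is built from copairings of restriction idempotents of the form $\ridm{\kappa_i^\dagger p}$ and $\ridm{\kappa_j^\dagger q}$, and these commute by (R2), which gives the required equality. Once \eqref{lem:commstmt} is repaired along these lines, the rest of your outline goes through.
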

\ifappendix
\begin{proof}
  See appendix.
\end{proof}
\fi
The first part of this lemma can be seen as a form of commutativity for
decisions -- and, indeed, it performs most of the heavy lifting in showing
commutativity of conjunction and disjunction. On the other hand, parts
\eqref{lem:con_irrev} and \eqref{lem:dis_irrev} shows that we could have
defined conjunction and disjunction more simply in
Definition~\ref{def:internal_log}. The reason why we chose the current
definition is that it yields entirely \emph{reversible} models (see also
\cite{KaarsgaardGlueck2018}), \ie, involving only partial isomorphisms. We will
discuss this property further in Section~\ref{sec:conclusion}. For now, we continue with the internal logic.

\begin{lemma}\label{lem:dec_dmq}
  $\Dec(X)$ is a bounded De Morgan quasilattice for any object $X$.
\end{lemma}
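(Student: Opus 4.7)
My plan is to verify each of the axioms (i)--(xiii) of Definition~\ref{def:dmqlat} for the constants $\top := \kappa_1$ and $\bot := \kappa_2$ and the connectives of Definition~\ref{def:internal_log}. By Lemma~\ref{lem:condec} every such expression produces a decision, so each axiom reduces to an equality of morphisms $X \to X+X$, which I would check in the graphical language.

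I would dispatch the bookkeeping axioms first. Involutivity of negation is immediate from $\gamma\gamma = \id$. The De Morgan laws reduce to pushing $\gamma$ through the defining diagrams of $\land$ and $\lor$: since $\gamma$ is self-inverse and interchanges the two coproduct injections, it transports the definition of one connective into that of the other with the inputs negated. For the unit laws $\dec{p} \land \top = \dec{p}$ and $\dec{p} \lor \bot = \dec{p}$, substituting $\dec{q} = \kappa_1$ (respectively $\kappa_2$) collapses the $\dec{q} + \ridm{\dec{q}}$ block, and Lemma~\ref{lem:utility}\eqref{lem:8} together with the decision laws lets the whole composite simplify back to $\dec{p}$. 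Idempotence uses $\dec{p}^\dagger \dec{p} = \ridm{\dec{p}}$ and $\dec{\mkern-3mu\dec{p}\mkern-3mu} = \dec{p}$ from Lemma~\ref{lem:utility}\eqref{lem:1}. Commutativity of both $\land$ and $\lor$ is essentially the content of Lemma~\ref{lem:dec_misc}\eqref{lem:commstmt}, which was evidently engineered precisely for this step.

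The main obstacle is associativity together with the two distributivity laws. My strategy is to first invoke Lemma~\ref{lem:dec_misc}\eqref{lem:con_irrev}--\eqref{lem:dis_irrev} to trade the definitions of $\land$ and $\lor$ for their simpler ``irreversible'' forms, which eliminate the partial inverses and leave the equations stated purely in terms of decisions and restriction idempotents. The De Morgan laws already established then halve the distributivity work: it suffices to prove, say, $\dec{p} \land (\dec{q} \lor \dec{r}) = (\dec{p} \land \dec{q}) \lor (\dec{p} \land \dec{r})$ and obtain the other law by negating. To close each remaining equation I would rewrite both sides into a common normal form, using $(\ridm{e}+\ridm{e})\dec{f} = \dec{f}\ridm{e} = \dec{f\ridm{e}}$ from Lemma~\ref{lem:utility}\eqref{lem:6}--\eqref{lem:7} to slide restriction idempotents past decisions, $\dec{g}f = (f+f)\dec{gf}$ from Lemma~\ref{lem:utility}\eqref{lem:9} to pull morphisms through decisions, and the decision laws themselves to finish. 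These are lengthy graphical calculations; in keeping with the treatment of Lemmas~\ref{lem:condec} and \ref{lem:dec_misc}, I would present only the strategy in the main text and defer the full diagram-chasing to the appendix.
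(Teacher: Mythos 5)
Your proposal is correct and follows essentially the same route as the paper: an axiom-by-axiom verification in the graphical language, with Lemma~\ref{lem:dec_misc}\eqref{lem:commstmt} carrying commutativity, the ``irreversible'' forms \eqref{lem:con_irrev}--\eqref{lem:dis_irrev} simplifying the harder cases, and the lengthy associativity/distributivity diagram chases deferred to the appendix. The only cosmetic difference is that you obtain the second distributive law from the first via double negation and De Morgan, where the paper appeals to a symmetric graphical argument.
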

\begin{proof}
  \ifappendix
  We show only a few of the cases here using the graphical language. See the
  appendix for the rest.
  \else
  In the interest of conserving space, we show only a few of the cases using
  the graphical language.
  \fi
  Idempotence of conjunction, \ie, $\dec{p} \land \dec{p} = \dec{p}$, follows by
  \ctikzfig{con_idemp}
  and similarly for disjunction. That $\dec{p} \land \top = \dec{p}$ is shown 
  simply by 
  \ctikzfig{con_unit}
  and again, the unit law for disjunction has an analogous proof. The first De 
  Morgan law, that $\neg \dec{p} \land \neg \dec{q} = \neg (\dec{p} \lor 
  \dec{q})$
  \ctikzfig{dm1}
  and the proof of the second De Morgan law follows similarly.
\end{proof}
As such, we have that each collection of decisions on an object form a local
model of \KTw, giving us the first part of the fibration. For the second, we
need to show that decision transformers preserve entailment and the
propositional connectives (though not necessarily the constants). This is shown
in the following lemma.

\begin{lemma}\label{lem:f_diam_homo}
  Let $X \tot{f} Y$. Then $\Dec(Y) \tot{f^\diamond} \Dec(X)$ is a homomorphism
  of De Morgan quasilattices, \ie,
  \begin{multicols}{2}
  \begin{enumerate}[(i)]
    \item\label{lem:monot} $\dec{p} \entails \dec{q}$ implies 
    $f^\diamond(\dec{p}) \entails 
    f^\diamond(\dec{q})$
    \item\label{lem:pres_neg} $f^\diamond(\neg\dec{p}) = \neg 
    f^\diamond(\dec{p})$
    \item\label{lem:pres_conj} $f^\diamond(\dec{p} \land \dec{q}) =
    f^\diamond(\dec{p}) \land f^\diamond(\dec{q})$
    \item\label{lem:pres_disj} $f^\diamond(\dec{p} \lor \dec{q}) =
    f^\diamond(\dec{p}) \lor f^\diamond(\dec{q})$
  \end{enumerate}
  \end{multicols}
  In addition, if $f$ is total then $f^\diamond$ is a homomorphism of
  \emph{bounded} De Morgan quasilattice; \ie, we also have $f^\diamond(\top) =
  \top$ and $f^\diamond(\bot) = \bot$.
\end{lemma}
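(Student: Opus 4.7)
The plan is to dispatch the four listed properties in the order (ii), (iii), (iv), (i), and then to handle the totality claim separately.

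First, part (ii) is a one-liner using Lemma~\ref{lem:utility}\eqref{lem:4} applied to the morphism $\dec{p}f$:
$f^\diamond(\neg\dec{p}) = \dec{(\gamma\dec{p})f} = \dec{\gamma(\dec{p}f)} = \gamma\dec{\dec{p}f} = \gamma f^\diamond(\dec{p}) = \neg f^\diamond(\dec{p})$.

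Second, for (iii), I would expand the formula $\dec{p}\land\dec{q} = (\id+\dec{p}^\dagger)\alpha(\dec{q}+\ridm{\dec{q}})\dec{p}$ and postcompose by $f$. The central tool is the slide identity $\dec{g}f = (f+f)\dec{gf}$ from Lemma~\ref{lem:utility}\eqref{lem:9}, which moves $f$ leftward through a decision box at the cost of duplicating it on the output. Applying this slide at $\dec{p}f$ and again at $\dec{q}f$ (inside the tensor factor $(\dec{q}+\ridm{\dec{q}})(f+f)$), together with R4 for the restriction-idempotent part $\ridm{\dec{q}}f = f\ridm{f^\diamond(\dec{q})}$, naturality of the associator $\alpha$, and the cotupling identity $\dec{p}^\dagger(f+f) = f\cdot f^\diamond(\dec{p})^\dagger$ (a consequence of the formula in Lemma~\ref{lem:utility}\eqref{lem:2}, R4, and $[ha,hb] = h[a,b]$), rearranges the expression into
$(\dec{p}\land\dec{q})f = (f+f)\bigl(f^\diamond(\dec{p})\land f^\diamond(\dec{q})\bigr)$.
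On the other hand, applying Lemma~\ref{lem:utility}\eqref{lem:9} directly to the decision $\dec{p}\land\dec{q}$ (which is a decision by Lemma~\ref{lem:condec}) gives $(\dec{p}\land\dec{q})f = (f+f)\,f^\diamond(\dec{p}\land\dec{q})$. To conclude equality of the two candidate decisions from this --- noting that $(f+f)$ need not be a mono --- I would appeal to Lemma~\ref{lem:decrep}, computing $\kappa_i^\dagger\bigl(f^\diamond(\dec{p})\land f^\diamond(\dec{q})\bigr)$ from the conjunction formula and matching it to $\ridm{\kappa_i^\dagger(\dec{p}\land\dec{q})f}$.

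Part (iv) is proved by a completely symmetric computation starting from the disjunction formula. Part (i) then follows at once from (iii): if $\dec{p}\land\dec{q} = \dec{p}$, then applying $f^\diamond$ and (iii) yields $f^\diamond(\dec{p})\land f^\diamond(\dec{q}) = f^\diamond(\dec{p})$, i.e.\ $f^\diamond(\dec{p})\entails f^\diamond(\dec{q})$. For the totality claim, I would compute $f^\diamond(\top) = \dec{\kappa_1 f}$ via Lemma~\ref{lem:utility}\eqref{lem:2}: the partial inverse has components $\ridm{\kappa_1^\dagger\kappa_1 f} = \ridm{f}$ and $\ridm{\kappa_2^\dagger\kappa_1 f} = \ridm{0} = 0$, using $\kappa_1^\dagger\kappa_1 = \id$ and $\kappa_2^\dagger\kappa_1 = 0$. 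When $f$ is total, $\ridm{f} = \id$, so $\dec{\kappa_1 f}^\dagger = [\id,0] = \kappa_1^\dagger$ and hence $\dec{\kappa_1 f} = \kappa_1 = \top$ by uniqueness of partial inverses; the case for $\bot$ is symmetric.

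The principal obstacle is the bookkeeping in (iii): tracking the associator through the slide, verifying the cotupling identity $\dec{p}^\dagger(f+f) = f\cdot f^\diamond(\dec{p})^\dagger$, and, most delicately, circumventing the non-cancellability of $(f+f)$ at the end. That last step is where Lemma~\ref{lem:decrep} is indispensable, letting us compare the two candidate decisions componentwise rather than attempting an illegitimate cancellation.
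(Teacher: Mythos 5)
Your proposal is correct and follows essentially the same route as the paper: part (ii) via Lemma~\ref{lem:utility}\eqref{lem:4}, part (i) as a formal consequence of (iii), parts (iii) and (iv) by comparing $\kappa_i^\dagger$-components of the two candidate decisions via Lemma~\ref{lem:decrep}, and the totality claim by computing the partial inverse of $\dec{\kappa_1 f}$. The detour through the slide identity $\dec{g}f = (f+f)\dec{gf}$ in (iii) is harmless but ultimately discarded, as you yourself note --- the load-bearing step is exactly the componentwise comparison the paper carries out in its appendix.
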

\begin{proof}
  \eqref{lem:monot} follows by \eqref{lem:pres_conj} since 
  $\dec{p} \entails \dec{q}$ iff $\dec{p} \preccurlyeq \dec{q}$ iff $\dec{p}
  \land \dec{q} = \dec{q}$, which in turn implies that $f^\diamond(\dec{p}) 
  \land f^\diamond(\dec{q}) = f^\diamond(\dec{p} \land \dec{q}) = 
  f^\diamond(\dec{p})$, so $f^\diamond(\dec{p}) \preccurlyeq 
  f^\diamond(\dec{q})$ as well, \ie, $f^\diamond(\dec{p}) \entails 
  f^\diamond(\dec{q})$.

  For \eqref{lem:pres_neg}, we compute $f^\diamond(\neg\dec{p}) =
  f^\diamond(\gamma\dec{p}) = f^\diamond(\dec{\gamma p}) = 
  \dec{\mkern-3mu\dec{\gamma p}f} = \dec{\gamma pf} = \gamma \dec{pf} = \gamma 
  \dec{\mkern-3mu\dec{p}f} = \neg f^\diamond(\dec{p})$ (using 
  Lemma~\ref{lem:utility}).
  
  \eqref{lem:pres_conj} follows by lengthy but straightforward 
  computation\ifappendix (see appendix)\fi.
  
  \eqref{lem:pres_disj} is analogous to the previous case.
\end{proof}

Notice the final part regarding preservation of units. Generally,
$f^\diamond(\top) = \dec{\top f} = \dec{\kappa_1 f}$, so
$\ridm{f^\diamond(\top)} = \ridm{\dec{\kappa_1 f}} = \ridm{\kappa_1 f} =
\ridm{\ridm{\kappa_1} f} = \ridm{f}$, so if $f$ is not total, $\dec{\kappa_1 f}
\neq \kappa_1$ (instead $\dec{\kappa_1 f} = \kappa_1 \ridm{f}$). 

Putting the two lemmas together gives us the main result:

\begin{theorem}\label{thm:internal_logic}
  In every extensive restriction category \C, decisions over \C{} form a fibred 
  De Morgan quasilattice via the decision fibration.
\end{theorem}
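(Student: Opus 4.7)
The proof is essentially an assembly step: Lemma~\ref{lem:dec_dmq} gives the fibrewise structure, Lemma~\ref{lem:f_diam_homo} gives the structure of the reindexing maps, and functoriality of $\Dec$ has already been established. My plan is therefore to lift the functor $\C^{\op} \tot{\Dec} \Set$ to a functor $\C^{\op} \to \DMQLat$ and observe that nothing new needs to be checked.

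First, I would make precise what it means for decisions to form a \emph{fibred} De Morgan quasilattice: namely, the existence of a functor $\Dec : \C^{\op} \to \DMQLat$ whose composition with the forgetful functor $\DMQLat \to \Set$ is the already-constructed $\Dec : \C^{\op} \to \Set$. To define this lift on objects, I assign to each $X$ the bounded De Morgan quasilattice $(\Dec(X), \neg, \land, \lor, \top, \bot)$ of Definition~\ref{def:internal_log}; this is well-formed and satisfies the axioms by Lemma~\ref{lem:condec} and Lemma~\ref{lem:dec_dmq}. On morphisms, I assign to $Y \tot{f} X$ the reindexing map $f^\diamond : \Dec(X) \to \Dec(Y)$, which is a homomorphism of (unbounded) De Morgan quasilattices by Lemma~\ref{lem:f_diam_homo} parts \eqref{lem:pres_neg}, \eqref{lem:pres_conj}, \eqref{lem:pres_disj}.

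Functoriality of this lift is inherited directly from functoriality of $\Dec : \C^{\op} \to \Set$, because the underlying set maps are identical; the equations $\Dec(\id_X) = \id$ and $\Dec(gf) = \Dec(f) \circ \Dec(g)$ already proved in Theorem~4.1 suffice, and composition of De~Morgan quasilattice homomorphisms is itself such a homomorphism. Thus $\Dec$ lifts to a functor $\C^{\op} \to \DMQLat$, which is exactly the assertion that decisions form a fibred De Morgan quasilattice via the decision fibration.

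Finally, I would note the refinement afforded by totality: by the second half of Lemma~\ref{lem:f_diam_homo}, restricting to the subcategory $\Total(\C)$ (which has the same objects but only total morphisms between them) yields reindexing maps that additionally preserve $\top$ and $\bot$, so that $\Dec$ restricts to a functor $\Total(\C)^{\op} \to \mathbf{BDMQLat}$ into bounded De Morgan quasilattices. There is no real obstacle here; the only mildly delicate point is confirming that the functoriality established set-theoretically transfers verbatim to the structured setting, which it does because the structure maps $\neg, \land, \lor, \top, \bot$ are defined purely in terms of categorical data that $f^\diamond$ already commutes with.
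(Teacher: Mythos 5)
Your proposal is correct and follows essentially the same route as the paper, which proves the theorem by simply combining Lemma~\ref{lem:dec_dmq} (fibrewise structure) with Lemma~\ref{lem:f_diam_homo} (reindexing maps are homomorphisms). Your additional elaboration --- making explicit that this amounts to lifting $\Dec : \C^{\op} \to \Set$ to $\DMQLat$, with functoriality inherited because the underlying set maps are unchanged, and noting the restriction to $\Total(\C)$ for the bounded case --- is a faithful unpacking of what the paper leaves implicit rather than a different argument.
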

\begin{proof}
  By Lemmas~\ref{lem:dec_dmq} and \ref{lem:f_diam_homo}.
\end{proof}
We previously claimed that the conjunction order was the more suitable one for
entailment in extensive restriction categories. We are finally ready to state
why:

\begin{lemma}\label{lem:entailment}
  Entailment is upwards directed in truth and definedness: $\dec{p} \entails
  \dec{q}$ iff $\kappa_1^\dagger \dec{p}
  \le \kappa_1^\dagger \dec{q}~\text{and}~\ridm{\dec{p}} \le \ridm{\dec{q}}$.
\end{lemma}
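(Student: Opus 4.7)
The plan is to unfold $\dec{p} \entails \dec{q}$ via Definition~\ref{def:internal_log} to the equation $\dec{p} \land \dec{q} = \dec{p}$, and then use Lemma~\ref{lem:decrep} to reduce this to two conditions on restriction idempotents. Since $\dec{p}$ decides itself (Lemma~\ref{lem:utility}\eqref{lem:1}) and $\dec{p} \land \dec{q}$ is a decision (Lemma~\ref{lem:condec}), this equality holds iff $\kappa_i^\dagger(\dec{p} \land \dec{q}) = \kappa_i^\dagger \dec{p}$ for $i = 1, 2$. The strategy is therefore to compute both $\kappa_1^\dagger$ and $\kappa_2^\dagger$ postcompositions of $\dec{p} \land \dec{q}$ in terms of the restriction idempotents $\kappa_i^\dagger \dec{p}$, $\kappa_i^\dagger \dec{q}$, and $\ridm{\dec{q}}$.

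Unfolding the definition of $\land$ (or its simpler form from Lemma~\ref{lem:dec_misc}\eqref{lem:con_irrev}), a short coproduct calculation gives $\kappa_1^\dagger(\dec{p} \land \dec{q}) = (\kappa_1^\dagger \dec{q})(\kappa_1^\dagger \dec{p})$, and postcomposing with $\nabla$ (and using (D1) together with Lemma~\ref{lem:utility}\eqref{lem:3}) gives $\ridm{\dec{p} \land \dec{q}} = \ridm{\dec{p}}\,\ridm{\dec{q}}$. The forward direction is now essentially automatic: from $\dec{p} \land \dec{q} = \dec{p}$, applying $\kappa_1^\dagger$ yields $(\kappa_1^\dagger \dec{q})(\kappa_1^\dagger \dec{p}) = \kappa_1^\dagger \dec{p}$, i.e., $\kappa_1^\dagger \dec{p} \le \kappa_1^\dagger \dec{q}$; applying the restriction combinator yields $\ridm{\dec{p}}\,\ridm{\dec{q}} = \ridm{\dec{p}}$, i.e., $\ridm{\dec{p}} \le \ridm{\dec{q}}$.

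For the converse direction, assume both inequalities. The $\kappa_1^\dagger$ case of Lemma~\ref{lem:decrep} is immediate from the first formula. For $\kappa_2^\dagger$, I would compute $\kappa_2^\dagger(\dec{p} \land \dec{q}) = [\kappa_2^\dagger \dec{q},\, \ridm{\dec{q}}]\dec{p}$ and simplify by rewriting $\dec{p} = (\kappa_1^\dagger \dec{p} + \kappa_2^\dagger \dec{p})\dec{p}$, a consequence of $\dec{p} = \dec{p}\dec{p}^\dagger\dec{p}$ and Lemma~\ref{lem:utility}\eqref{lem:2.5}. The main obstacle is a small \emph{branch disjointness} side lemma: $(\kappa_2^\dagger \dec{q})(\kappa_1^\dagger \dec{q}) = 0$. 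I would establish this by composing the identity $\dec{q}\dec{q}^\dagger = \kappa_1^\dagger \dec{q} + \kappa_2^\dagger \dec{q}$ (Lemma~\ref{lem:utility}\eqref{lem:2.5}) with $\kappa_1$ on the right to obtain $\dec{q}(\kappa_1^\dagger \dec{q}) = \kappa_1(\kappa_1^\dagger \dec{q})$, and then with $\kappa_2^\dagger$ on the left. Combined with the assumption $\kappa_1^\dagger \dec{p} \le \kappa_1^\dagger \dec{q}$, this annihilates the first component; the second component reduces to $\kappa_2^\dagger \dec{p}$ using $\kappa_2^\dagger \dec{p} \le \ridm{\dec{p}} \le \ridm{\dec{q}}$, yielding $\kappa_2^\dagger(\dec{p} \land \dec{q}) = \kappa_2^\dagger \dec{p}$.
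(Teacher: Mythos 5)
Your proposal is correct and follows essentially the same route as the paper's proof: reduce $\dec{p} \land \dec{q} = \dec{p}$ to the two component equations via Lemma~\ref{lem:decrep}, use the formulas $\kappa_1^\dagger(\dec{p}\land\dec{q}) = (\kappa_1^\dagger\dec{q})(\kappa_1^\dagger\dec{p})$ and $\ridm{\dec{p}\land\dec{q}} = \ridm{\dec{p}}\,\ridm{\dec{q}}$ for one direction, and handle the $\kappa_2^\dagger$ component in the converse by killing the first branch via disjointness and absorbing $\ridm{\dec{q}}$ via $\ridm{\dec{p}} \le \ridm{\dec{q}}$. The only difference is presentational: the paper carries the computation inside restriction bars and uses $\ridm{\kappa_1^\dagger\dec{q}}\,\ridm{\kappa_2^\dagger\dec{q}} = 0$ silently, whereas you isolate and prove this branch-disjointness fact explicitly, which is a small improvement in rigor rather than a different argument.
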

\ifappendix
\begin{proof}
  See appendix.
\end{proof}
\fi

In other words, $\dec{p}$ entails $\dec{q}$ iff $\dec{q}$ is both \emph{at
least as true} and \emph{at least as defined} as $\dec{p}$ is. That is,
entailment preserves not only \emph{truth} (as we expect all entailments to)
but also \emph{information} (as we expect of orders on \emph{partial} maps).
Compare this to the disjunction partial order for which $\dec{p} \dle \dec{q}$
instead states that $\dec{q}$ is less \emph{false} and less \emph{defined} than
$\dec{p}$: In other words, it prefers for information to be \emph{forgotten}
rather than preserved.

We move on now to an important special case of the situation above, which is
when only \emph{total} decisions are considered rather than arbitrary ones. For
this, we need a small lemma regarding the restriction idempotents of decisions
when composed using the propositional connectives.

\begin{lemma}\label{lem:ridmdec}
  We state some facts about restriction idempotents of decisions:
  \begin{multicols}{2}
  \begin{enumerate}[(i)]
    \item\label{lem:ridmdec:1} $\ridm{\neg\dec{p}} = \ridm{\dec{p}}$,
    \item\label{lem:ridmdec:2} $\ridm{\dec{p} \land \dec{q}} =
    \ridm{\dec{p}}\,\ridm{\dec{q}}$,
    \item\label{lem:ridmdec:3} $\ridm{\dec{p} \lor \dec{q}} =
    \ridm{\dec{p}}\,\ridm{\dec{q}}$,
    \item\label{lem:ridmdec:4} $\ridm{\dec{p} \land \dec{q}} \le
    \ridm{\dec{p}}$ and $\ridm{\dec{p} \land \dec{q}} \le \ridm{\dec{q}}$,
    \item\label{lem:ridmdec:5} $\ridm{\dec{p} \lor \dec{q}} \le \ridm{\dec{p}}$
    and $\ridm{\dec{p} \lor \dec{q}} \le \ridm{\dec{q}}$.
  \end{enumerate}
  \end{multicols}
\end{lemma}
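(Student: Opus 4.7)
All the identities in (i)--(iii) compute restriction idempotents of compound expressions that are themselves decisions (by Lemma~\ref{lem:condec}). The uniform strategy is to use that any decision $d$ satisfies $d = \dec{d}$ by Lemma~\ref{lem:utility}\eqref{lem:1}, so by (D1) together with Lemma~\ref{lem:utility}\eqref{lem:3} one has $\ridm{d} = \codiag\,d$; the task thus reduces to computing the codiagonal of each defining expression from Definition~\ref{def:internal_log}.

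For (i), the computation is immediate: $\codiag\gamma = \codiag$ since the twist merely swaps the two summands that the codiagonal identifies, giving $\ridm{\neg\dec{p}} = \codiag\gamma\dec{p} = \codiag\dec{p} = \ridm{\dec{p}}$.

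For (ii) and (iii) I would unfold the definitions of $\land$ and $\lor$. In case (ii), I first simplify $\codiag(\id+\dec{p}^\dagger)\alpha$ to the cotuple $[[\id,\dec{p}^\dagger\kappa_1],\dec{p}^\dagger\kappa_2]$ over $(X+X)+X$; substituting $\dec{p}^\dagger\kappa_i = \ridm{\kappa_i^\dagger\dec{p}}$ via Lemma~\ref{lem:utility}\eqref{lem:2}, then postcomposing with $(\dec{q}+\ridm{\dec{q}})\dec{p}$, and finally applying the restriction laws (in particular R3 and R4) together with the partial-isomorphism identity $\dec{p}^\dagger\dec{p} = \ridm{\dec{p}}$ (which surfaces when the branches of $\dec{p}$ are re-assembled against $\dec{p}^\dagger$), the expression reduces to $\ridm{\dec{p}}\,\ridm{\dec{q}}$. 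Case (iii) is dual. This is the main obstacle of the proof, and is more cleanly carried out in the graphical calculus of Section~\ref{sub:the_internal_logic}, where the decision laws let one directly cancel $\dec{p}^\dagger$ against $\dec{p}$ and identify $\ridm{\dec{q}}$ as the net effect on the surviving wire.

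For (iv) and (v), the inequalities are a formal corollary of (ii) and (iii). For any restriction idempotents $\ridm{e_1}, \ridm{e_2}$ on the same object, R3 and the self-restriction of restriction idempotents give $\ridm{\ridm{e_1}\ridm{e_2}} = \ridm{e_1}\ridm{e_2}$, whence $\ridm{e_i}\cdot\ridm{\ridm{e_1}\ridm{e_2}} = \ridm{e_i}\ridm{e_1}\ridm{e_2} = \ridm{e_1}\ridm{e_2}$ by R2 and idempotence; this witnesses $\ridm{e_1}\ridm{e_2} \le \ridm{e_i}$ by the definition of the natural partial order.
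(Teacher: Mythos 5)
Your proposal is correct, but for the substantive parts (ii) and (iii) it takes a genuinely different route from the paper's. You first invoke Lemma~\ref{lem:condec} to know that $\dec{p} \land \dec{q}$ is a decision, reduce $\ridm{\dec{p} \land \dec{q}}$ to $\codiag(\dec{p} \land \dec{q})$ via (D1) and Lemma~\ref{lem:utility}, and then evaluate that codiagonal exactly as a cotuple, cancelling $\dec{p}^\dagger$ against $\dec{p}$ only at the very end. The paper instead computes the restriction idempotent directly: it uses $\ridm{gf} = \ridm{\ridm{g}f}$ to discard the postcomposed total part $(\id + \codiag)\alpha$ early (after replacing $\dec{p}^\dagger$ by $\codiag$, whose definedness deficit is absorbed by the $\dec{p}$ on the right), which collapses everything to $\ridm{(\ridm{\dec{q}}+\ridm{\dec{q}})\dec{p}} = \ridm{\dec{p}\,\ridm{\dec{q}}} = \ridm{\dec{p}}\,\ridm{\dec{q}}$ by Lemma~\ref{lem:utility}. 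Your approach buys a single uniform recipe (compute $\codiag$ of each connective) and reuses the component identities $\dec{p}^\dagger\kappa_i = \ridm{\kappa_i^\dagger p}$ already established; the cost is that you must evaluate the expression exactly rather than merely its restriction, and that you depend on Lemma~\ref{lem:condec}, which the paper's computation does not need (there is no circularity, since that lemma is proved independently). The one place you leave a sketch is precisely this evaluation, but the tools you name do suffice: writing $\dec{p} = (\ridm{\kappa_1^\dagger p} + \ridm{\kappa_2^\dagger p})\dec{p}$ and using Lemma~\ref{lem:utility}\eqref{lem:6} collapses the two cotuple components to $\ridm{\kappa_i^\dagger p}\,\ridm{q}$, and the cotuple reassembles to $\ridm{q}\,\dec{p}^\dagger\dec{p} = \ridm{\dec{p}}\,\ridm{\dec{q}}$. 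Parts (i), (iv) and (v) match the paper's argument in substance, with (i) using $\codiag\gamma = \codiag$ where the paper uses totality of $\gamma$.
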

\ifappendix
\begin{proof}
  See appendix.
\end{proof}
\fi
We can now show that total decisions form a fibred Boolean algebra.
\begin{corollary}\label{cor:tdec}
  $\TDec(X)$ is a Boolean algebra for any object $X$, and $f^\vartriangle :
  \TDec(Y) \to \TDec(X)$ is homomorphism of Boolean algebras for any total 
  $X \tot{f} Y$.
\end{corollary}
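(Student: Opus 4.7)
The plan is to derive this as a direct corollary of the lemmas already in hand rather than appealing to any prior result about extensive categories. First I would verify that $\TDec(X)$ is closed under each operation of the bounded De Morgan quasilattice structure on $\Dec(X)$ established in Theorem~\ref{thm:internal_logic}. The constants $\top = \kappa_1$ and $\bot = \kappa_2$ are total because coproduct injections are total in a restriction category with restriction coproducts. Closure under the connectives is immediate from Lemma~\ref{lem:ridmdec}: for total $\dec{p}, \dec{q}$ we have $\ridm{\neg\dec{p}} = \ridm{\dec{p}} = \id$, $\ridm{\dec{p} \land \dec{q}} = \ridm{\dec{p}}\,\ridm{\dec{q}} = \id$, and $\ridm{\dec{p} \lor \dec{q}} = \ridm{\dec{p}}\,\ridm{\dec{q}} = \id$. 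Hence $\TDec(X)$ inherits the structure of a bounded distributive De Morgan quasilattice from $\Dec(X)$.

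The second step is to establish the Boolean-algebra-specific laws, namely the complement laws $\dec{p} \land \neg\dec{p} = \bot$ and $\dec{p} \lor \neg\dec{p} = \top$. Once these are in hand, both absorption laws follow by short standard manipulations using only distributivity, idempotence, and the already-established bounded DMQ axioms (e.g., $\dec{p} \lor (\dec{p} \land \dec{q}) = (\dec{p} \land \top) \lor (\dec{p} \land \dec{q}) = (\dec{p} \land \dec{q}) \lor (\dec{p} \land \neg \dec{q}) \lor (\dec{p} \land \dec{q}) = \dec{p} \land (\dec{q} \lor \neg\dec{q}) = \dec{p}$). Unfolding the textual definitions following Definition~\ref{def:internal_log} and using $\ridm{\neg\dec{p}} = \ridm{\dec{p}} = \id$, the conjunction reduces to $(\id + \dec{p}^\dagger)\alpha(\gamma\dec{p} + \id)\dec{p}$ and the disjunction reduces dually. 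I expect this step to be the main obstacle. The key technical ingredient is that totality of $\dec{p}$ yields $\dec{p}^\dagger\dec{p} = \ridm{\dec{p}} = \id$, so the graphical calculation collapses: tracing either branch of $\dec{p}$ through the expression and applying $\dec{p}^\dagger$ cancels the initial branching, routing every input uniformly into $\kappa_2$ (yielding $\bot$) for conjunction and into $\kappa_1$ (yielding $\top$) for disjunction. Notice that the argument genuinely requires totality---for general $\dec{p}$ one only has $\dec{p}^\dagger\dec{p} = \ridm{\dec{p}}$, which is precisely why $\Dec(X)$ at large is only a De Morgan quasilattice.

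Finally, for the homomorphism claim, Lemma~\ref{lem:f_diam_homo} already yields preservation of $\neg$, $\land$, $\lor$ and, when $f$ is total, of $\top$ and $\bot$. It only remains to check that $f^\diamond$ restricts to a function $\TDec(Y) \to \TDec(X)$ when $f$ is total, i.e., that totality of $\dec{p}$ propagates through $f^\diamond(\dec{p}) = \dec{\mkern-3mu\dec{p}f}$. This is a short calculation: since total morphisms compose to totals, $\dec{p}f$ is total, and then by Lemma~\ref{lem:utility}\eqref{lem:3} we have $\ridm{f^\diamond(\dec{p})} = \ridm{\dec{\mkern-3mu\dec{p}f}} = \ridm{\dec{p}f} = \id$. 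The map $f^\vartriangle$ is then simply $f^\diamond$ restricted to $\TDec$, and inherits all required algebraic properties from $f^\diamond$.
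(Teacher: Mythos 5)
Your proof is correct, and it shares the paper's overall skeleton: both arguments reduce the claim to verifying, on top of the bounded De Morgan quasilattice structure of $\Dec(X)$ (Lemma~\ref{lem:dec_dmq}) and the totality-preservation facts of Lemma~\ref{lem:ridmdec}, the extra laws separating Boolean algebras from bounded De Morgan quasilattices, and both delegate the homomorphism claim to Lemma~\ref{lem:f_diam_homo}. The genuine difference is in which of those extra laws you verify directly. The paper proves all four --- both absorption laws, contradiction, and \emph{tertium non datur} --- by separate graphical computations, whereas you prove only the two complement laws directly and then derive absorption algebraically from them; your chain $\dec{p} \lor (\dec{p} \land \dec{q}) = (\dec{p}\land(\dec{q}\lor\neg\dec{q}))\lor(\dec{p}\land\dec{q}) = (\dec{p}\land\dec{q})\lor(\dec{p}\land\neg\dec{q}) = \dec{p}\land(\dec{q}\lor\neg\dec{q}) = \dec{p}$ uses only axioms from Definition~\ref{def:dmqlat} together with \emph{tertium non datur}, so the reduction is sound and halves the number of direct computations. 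Your sketch of the contradiction law is also the right one: with $\ridm{\dec{p}}=\ridm{\neg\dec{p}}=\id$ the expression $(\id+\dec{p}^\dagger)\alpha(\gamma\dec{p}+\id)\dec{p}$ routes both branches into $\kappa_2$ via $\dec{p}^\dagger = [\,\ridm{\kappa_1^\dagger p},\ridm{\kappa_2^\dagger p}\,]$, and totality is exactly what upgrades the generally valid $\dec{p}\land\neg\dec{p}=\kappa_2\ridm{\dec{p}}$ to $\bot$ --- though this step should ultimately be written out equationally (or graphically, as the paper does) rather than by input-tracing. A small bonus of your version is the explicit check that $f^\diamond$ carries total decisions to total decisions when $f$ is total, so that $f^\vartriangle$ is actually well defined as a map $\TDec(Y)\to\TDec(X)$; the paper leaves this implicit.
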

\begin{proof}
  Since $\Dec(X)$ is a De Morgan quasilattice (Lemma~\ref{lem:dec_dmq}), since
  total decisions are specifically decisions, and since the constants are total
  and the connectives preserve totality (Lemma~\ref{lem:ridmdec}), it suffices
  to show that when $\dec{p}$ and $\dec{q}$ are total they satisfy the
  \emph{absorption} laws $\dec{p} = \dec{p} \land (\dec{p} \lor \dec{q}$ and
  $\dec{p} = \dec{p} \lor (\dec{p} \land \dec{q})$, and the laws of
  contradiction and \emph{tertium non datur}, $\dec{p} \land \neg\dec{p} = 
  \bot$ and $\dec{p} \lor \neg\dec{p} = \top$. The first absorption law
  follows by
  \ctikzfig{absorption}
  and the other follows analogously. Likewise, the law of contradiction can be 
  shown as
  \ctikzfig{contradiction}
  and similarly for \emph{tertium non datur}.
\end{proof}

Using the previous corollary, it follows (see \cite{CJWW2015} for the original
proofs from effectus theory) that predicates over an extensive category form a
fibred Boolean algebra.
\begin{theorem}\label{thm:ext_bool}
  Predicates (or, equivalently, decisions) over an extensive category is a
  fibred Boolean algebra via the predicate fibration (or, equivalently, the
  decision fibration).
\end{theorem}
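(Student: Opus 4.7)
The plan is to leverage Corollary~\ref{cor:tdec} by exhibiting the given extensive category $\mathbf{E}$ as the total-morphism subcategory $\Total(\C)$ of a suitable extensive restriction category $\C$. The most direct construction is to take $\C$ to be $\mathbf{E}$ itself equipped with the trivial restriction structure $\ridm{f} = \id$ for every $f$: with this structure every morphism is total, every coproduct injection is trivially total, and Carboni-style extensivity of $\mathbf{E}$ coincides with the existence of unique decisions $\dec{f}$ constructed via pullbacks along the coproduct injections, so $\mathbf{E}$ is an extensive restriction category with $\Total(\mathbf{E}) = \mathbf{E}$. When the extensive category at hand lacks a suitable zero, one instead takes $\C$ to be the partial-map category $\Par(\mathbf{E}, \mathcal{M})$ for an appropriate stable system of monics $\mathcal{M}$, whose total-morphism subcategory is $\mathbf{E}$ by construction (and is then split extensive, which is why the proposition preceding this theorem needs $\Total(\Split(\C))$ in general).

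With this realization in hand, every decision in $\mathbf{E}$ is automatically total, so $\Dec(X) = \TDec(X)$, and every $X \tot{f} Y$ in $\mathbf{E}$ gives a total decision transformer $f^\vartriangle$. Corollary~\ref{cor:tdec} then immediately furnishes a Boolean algebra structure on $\TDec(X)$ and shows that each $f^\vartriangle$ is a homomorphism of Boolean algebras, proving the decision version of the claim. The predicate version follows by transporting this structure along the natural isomorphism $\Dec(-) \cong \Hom(-,1+1)$ of Theorem~\ref{thm:dec_cond_duality}: the bijection is itself constructed from the totals $!+!$ and from $\dec{(-)}$, both of which respect the whole fibrational structure, so it identifies the fibred Boolean algebra $\Dec(-)$ with $\Hom(-,1+1)$ equipped with the transported operations.

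The main obstacle is the mismatch between the axioms of an extensive restriction category (which demand a restriction zero, and thus zero morphisms) and the usual definition of an extensive category (which does not). The smoothest way to cope is the partial-map completion sketched above; verifying that this construction yields an extensive restriction category whose totality subcategory recovers $\mathbf{E}$ is the only step where care is needed, and it is essentially the content of the proposition preceding the theorem read in the opposite direction. Once that bridge is established, no further logical content is required: all the Boolean algebra operations and their verifications have already been carried out in Lemmas~\ref{lem:dec_dmq}, \ref{lem:f_diam_homo}, \ref{lem:ridmdec}, and Corollary~\ref{cor:tdec}.
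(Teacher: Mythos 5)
Your second route is the paper's own: take $\mathcal{M}$ to be the coproduct injections of $\mathbf{E}$, form $\Par(\mathbf{E},\mathcal{M})$, check that it is an extensive restriction category whose total maps recover $\mathbf{E}$, and invoke Corollary~\ref{cor:tdec}. But your first, ``most direct'' route --- equipping $\mathbf{E}$ itself with the trivial restriction structure $\ridm{f}=\id$ --- is a dead end, and not merely ``when the extensive category at hand lacks a suitable zero'': it fails for \emph{every} nontrivial extensive category. An extensive restriction category must have a restriction zero, hence a zero object; but in an extensive category the initial object is strict, so a zero object forces every object to be initial. Even setting that aside, the restriction zero axiom $\ridm{0_{X,X}}=0_{X,X}$ is incompatible with $\ridm{f}=\id_X$ unless $\id_X=0_{X,X}$. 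So the partial-map completion is not a fallback; it is the only option, and the substance of the theorem lives there.

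Within that route, the step you defer --- verifying that $\Par(\mathbf{E},\mathcal{M})$ is an extensive restriction category --- is exactly where extensivity of $\mathbf{E}$ is used, and it is the converse of the earlier proposition rather than that proposition ``read in the opposite direction,'' so it needs its own argument. The paper supplies it: for a partial map $X \fromm{m} X' \tot{f} Y+Z$, pull $f$ back along $\kappa_1$ and $\kappa_2$ to obtain $m_i : X_i \to X'$ with $X_1+X_2 \cong X'$ (this is precisely where extensivity enters), and take as decision the span with left leg $m$ and right leg $(m+m)(m_1+m_2)$ precomposed with that isomorphism; one also checks that $\mathcal{M}$ (the coproduct injections) is a stable system of monics, that the $+1$ monad classifies it so restriction coproducts exist, and that $0$ gives a restriction zero. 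With those details supplied your argument coincides with the paper's, and the transport of the Boolean structure along the isomorphism of Theorem~\ref{thm:dec_cond_duality} for the predicate formulation is fine.
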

\begin{proof}
  Since total decisions on objects form Boolean algebras by
  Corollary~\ref{cor:tdec}, it suffices to show that every extensive category
  arises as the subcategory of total morphisms of an extensive restriction
  category.
  
  Let \C{} be an extensive category, and $\mathcal{M}$ denote the collection of 
  all coproduct injections of \C{}. As remarked in \cite{Cockett2002}, this is 
  a stable system of monics, and by Example 4.17 of \cite{Cockett2003}, 
  $\Par(\C,\mathcal{M})$ is a classified restriction category under the $+1$ 
  monad. Since \C{} has coproducts and $\Par(\C,\mathcal{M})$ is classified, it 
  follows by Proposition 2.3 of \cite{Cockett2007} that $\Par(\C,\mathcal{M})$
  has restriction coproducts. That $0$ is a restriction zero in 
  $\Par(\C,\mathcal{M})$ follows straightforwardly, with the span 
  $X \fromm{!_X} 0 \tot{!_Y} Y$ as the unique zero morphism $X \tot{0_{X,Y}}
  Y$. As such, it suffices to show that decisions can be constructed in
  $\Par(\C,\mathcal{M})$. Let $X \fromm{m} X' \tot{f} Y+Z$ be an arbitrary
  morphism of $\Par(\C,\mathcal{M})$. Since \C{} is extensive it has pullbacks
  of coproduct injections along arbitrary morphisms, so the two squares
  \begin{center}
    \begin{tikzcd}
    X_1 \arrow[r, "m_1"] \arrow[d, "f_1"'] & X' \arrow[d, "f"] & X_2 \arrow[l, "m_2"'] \arrow[d, "f_2"] \\
    Y \arrow[r, "\kappa_1"']               & Y+Z               & Z \arrow[l, "\kappa_2"]               
    \end{tikzcd}
  \end{center}
  are pullbacks, and so the top row is a coproduct diagram (\ie, $X_1 + X_2
  \cong X'$). But then it readily follows that
  \begin{center}
    \begin{tikzcd}[cramped, sep=tiny, font=\scriptsize]
      &                     & X_1+X_2 \arrow[rd, "m_1+m_2"] \arrow[ld, "\cong"'] &                         &     \\
      & X' \arrow[ld, "m"'] &                                                    & X'+X' \arrow[rd, "m+m"] &     \\
    X &                     &                                                    &                         & X+X
    \end{tikzcd}
  \end{center}
  is a decision for $X \fromm{m} X' \tot{f} Y+Z$ in $\Par(\C,\mathcal{M})$, 
  and we are done.
\end{proof}


\section{Conclusion and future work} 
\label{sec:conclusion}
Motivated by an observation from flowchart languages that predicates serve a
dual role as both condition and decision, we have given an account of extensive
restriction categories (due to \cite{Cockett2002,Cockett2003,Cockett2007}) as
categories with an internal logic (namely \KTw) that internalize this duality,
in the form of a natural isomorphism between the predicate fibration and the
decision fibration.

We have also extended the graphical language of cocartesian
categories to one for extensive restriction categories, and used our results to
give an alternative proof of the fact that extensive categories, too, are
categories with an internal logic -- classical logic.
While the graphical language has proven itself useful in proving theorems, it
does have its shortcomings. For example, the only way to express restriction
idempotents of compositions, such as $\ridm{gf}$, is, awkwardly, as
\tikzfig{ridmfg}. That is, we would want only one representation of composition
as placing gadgets in sequence, but since $\ridm{gf}$ cannot generally be
expressed as a composite involving only smaller things (\eg, $\ridm{f}$ and
$\ridm{g}$), we are forced in this case to let the textual representation (\ie,
juxtaposition) bleed into the graphical language. The graphical notation for
decisions has similar issues.

An application of the developed theory is in reversible models of logics, which
was also the motivation for defining the connectives in slightly more involved
fashion, using partial inverses to decisions rather than the codiagonal.
Indeed, the inspiration for using decisions as predicates came from the study
of the categorical semantics of reversible flowchart languages (see
\cite{Glueck2017,KaarsgaardGlueck2018}). Since a decision in \C{} is still a
decision in $\Inv(\C)$ (see \cite{KaarsgaardGlueck2018}), $\Dec(X)$ is still a
De Morgan quasilattice in $\Inv(\C)$, though the homomorphisms between fibres
differ (\ie, only decision transformers that are partial isomorphisms occur in
the decision fibration on $\Inv(\C)$).

We have only considered the weak Kleene logic \KTw{} here, as it can be
constructed by purely sequential means. However, we conjecture that the strong
Kleene logic \KT{} can be modelled as well in extensive restriction categories
if additionally a \emph{parallel} composition operator such as finite joins
(see \cite{Guo2012}) is available. Finally, just the propositional fragment of
\KTw{} and classical logic has been considered in this paper. Though decisions
on an object yields a fibred category with a logical structure, we have not
explored extensions to models of first-order logics, \eg, by investigating the
feasibility of adjoints to substitution, as in the standard trick due to
Lawvere~\cite{Lawvere1969} (see also \cite{Jacobs1999}).

\bibliography{mfps}
\bibliographystyle{abbrv}

\ifappendix

\newpage

\appendix

\section{Omitted proofs} 
\label{sec:omitted_proofs}
\begin{proof}[Proof of Lemma~\ref{lem:utility}]
  For \eqref{lem:1} and \eqref{lem:2}, see \cite{Cockett2007}. \eqref{lem:2.5}
  follows by \eqref{lem:2} since
  $$\ridm{\dec{f}^\dagger} = \ridm{[\ridm{\kappa_1^\dagger
  f},\ridm{\kappa_2^\dagger f}]} = \ridm{\ridm{\kappa_1^\dagger f} +
  \ridm{\kappa_2^\dagger f}} = \ridm{\kappa_1^\dagger f}+\ridm{\kappa_2^\dagger
  f}.$$
  For \eqref{lem:3},
  $\ridm{\dec{f}} = \ridm{\dec{f}} = \ridm{\ridm{\nabla} \dec{f}} =
  \ridm{\nabla \dec{f}} = \ridm{\ridm{f}} = \ridm{f}$. To show \eqref{lem:4} we
  show that $\gamma \dec{f}$ decides $\gamma f$, since $\nabla \gamma \dec{f} =
  \nabla \dec{f} = \ridm{f} = \ridm{f} = \ridm{\ridm{\gamma} f} =
  \ridm{\gamma f}$ and
  \begin{align*}
    ((\gamma f) + (\gamma f)) \gamma \dec{f} 
    & = \gamma ((\gamma f) + (\gamma f)) \dec{f}
     = \gamma (\gamma + \gamma) (f+f) \dec{f}
     = \gamma (\gamma + \gamma) (\kappa_1 + \kappa_2) p \\
    & = \gamma ((\gamma \kappa_1) + (\gamma \kappa_2)) p
     = \gamma (\kappa_2 + \kappa_1) p
     = (\kappa_1 + \kappa_2) \gamma p
  \end{align*}
  which was what we wanted. We show \eqref{lem:5} analogously by showing that
  $\dec{\mkern-3mu\dec{f}g}$ decides $\dec{fg}$ since
  $\nabla\dec{\mkern-3mu\dec{f}g} = \ridm{\dec{f}g} = \ridm{\ridm{\nabla} 
  \dec{f}g} = \ridm{\nabla \dec{f} g} = \ridm{\ridm{f}g} = \ridm{fg}$ and
  \begin{align*}
    ((fg)+(fg))\dec{\mkern-3mu\dec{f}g} 
    & = ((f\ridm{f}g)+(f\ridm{f}g))\dec{\mkern-3mu\dec{f}g}
     = ((f\nabla\dec{f}g)+(f\nabla\dec{f}g))\dec{\mkern-3mu\dec{f}g} \\
    & = ((f\nabla)+(f\nabla)) ((\dec{f}g) + (\dec{f}g))\dec{\mkern-3mu\dec{f}g} 
    \\
    & = ((f\nabla)+(f\nabla)) (\kappa_1 + \kappa_2)\dec{f}g
     = ((f\nabla\kappa_1)+(f\nabla\kappa_2)) \dec{f}g \\
    & = (f+f) \dec{f}g
    = (\kappa_1+\kappa_2) fg \enspace.
  \end{align*}
  For \eqref{lem:6lem}, we observe that 
  $\ridm{(\ridm{e}+\ridm{e})\dec{f}} =
  \ridm{\nabla (\ridm{e}+\ridm{e})\dec{f}} = 
  \ridm{\ridm{e}\nabla\dec{f}} =
  \ridm{\ridm{e}\ridm{f}}  = \ridm{e}\ridm{f} = \ridm{f}\ridm{e} =
  \ridm{\dec{f}}\ridm{e}$
  so $ (\ridm{e}+\ridm{e})\dec{f} = (\ridm{e}+\ridm{e})\dec{f}
  \ridm{(\ridm{e}+\ridm{e})\dec{f}} = (\ridm{e}+\ridm{e})\dec{f} \ridm{\dec{f}}
  \ridm{e} = (\ridm{e}+\ridm{e})\dec{f} \ridm{e}. $
  To show \eqref{lem:6}, we show that the two morphism decide one another. We
  see that $\dec{f}\ridm{e}$ decides $(\ridm{e}+\ridm{e})\dec{f}$ since
  $\nabla\dec{f}\ridm{e} = \ridm{f}\ridm{e} = \ridm{\dec{f}}\ridm{e} =
  (\ridm{e}+\ridm{e})\dec{f}$ (see \eqref{lem:6lem} above) and
  \begin{align*}
    & (((\ridm{e}+\ridm{e})\dec{f})+((\ridm{e}+\ridm{e})\dec{f})) 
    \dec{f}\ridm{e}
     = ((\ridm{e}+\ridm{e})+(\ridm{e}+\ridm{e}))(\dec{f}+\dec{f})
    \dec{f}\ridm{e} \\
    & \quad = ((\ridm{e}+\ridm{e})+(\ridm{e}+\ridm{e}))(\dec{f}+\dec{f})
    \dec{\mkern-3mu\dec{f}\mkern-3mu}\ridm{e}
    = ((\ridm{e}+\ridm{e})+(\ridm{e}+\ridm{e}))(\kappa_1+\kappa_2)
    \dec{f}\ridm{e} \\
    & \quad = (\kappa_1+\kappa_2) (\ridm{e}+\ridm{e}) \dec{f}\ridm{e}
    = (\kappa_1+\kappa_2) (\ridm{e}+\ridm{e}) \dec{f}
  \end{align*}
  where $(\ridm{e}+\ridm{e}) \dec{f} = (\ridm{e}+\ridm{e}) \dec{f} \ridm{e}$ by
  \eqref{lem:6lem}; thus $\dec{f} \ridm{e}$ decides $(\ridm{e}+\ridm{e})
  \dec{f}$, \ie, $\dec{\mkern-2mu(\ridm{e}+\ridm{e}) \dec{f}\mkern-3mu} =
  \dec{f} \ridm{e} = \dec{\mkern-3mu\dec{f} \ridm{e}}$ (the latter by
  \eqref{lem:1}). In the other direction, $\nabla (\ridm{e}+\ridm{e}) \dec{f} = 
  \ridm{e} \nabla \dec{f} = \ridm{e} \ridm{f} = \ridm{f} \ridm{e} = 
  \ridm{\dec{f}} \ridm{e}$ and it is the case that
  \begin{align*}
    & ((\dec{f}\ridm{e})+(\dec{f}\ridm{e}))(\ridm{e}+\ridm{e})\dec{f} = ((\dec{f}\ridm{e})+(\dec{f}\ridm{e}))(\ridm{e}+\ridm{e})\dec{f}
    \ridm{e} \\
    & \quad = ((\dec{f}\ridm{e}\ridm{e})+(\dec{f}\ridm{e}\ridm{e}))\dec{f}
    \ridm{e} = ((\dec{f}\ridm{e})+(\dec{f}\ridm{e}))\dec{f} \ridm{e} \\
    & \quad = ((\dec{f}\ridm{e})+(\dec{f}\ridm{e}))\dec{\mkern-3mu\dec{f} 
    \ridm{e}} = (\kappa_1 + \kappa_2)\dec{f} \ridm{e}
  \end{align*}
  
  For \eqref{lem:7} we have that $\dec{f}\ridm{e} =
  \dec{\mkern-3mu\dec{f}\ridm{e}}$ by \eqref{lem:6} and get $\dec{f}\ridm{e} =
  \dec{\mkern-3mu\dec{f}\ridm{e}} = \dec{f\ridm{e}}$ by \eqref{lem:5}.
  
  For \eqref{lem:8}, by \eqref{lem:2} $\dec{f}^\dagger =
  \left[\,\ridm{\kappa_1^\dagger f},\ridm{\kappa_2^\dagger f}\,\right]$ so
  $\kappa_i^\dagger \dec{f} = \kappa_i^\dagger \left[\,\ridm{\kappa_1^\dagger
  f}, \ridm{\kappa_2^\dagger f}\,\right]^\dagger = \left(\left[\,
  \ridm{\kappa_1^\dagger f}, \ridm{\kappa_2^\dagger f}\,\right]
  \kappa_i\right)^\dagger = \ridm{\kappa_i^\dagger f}^\dagger = 
  \ridm{\kappa_i^\dagger f}$.
  
  For \eqref{lem:9}, we compute
  \begin{align*}
    \dec{g}f & = (\kappa_1^\dagger + \kappa_2^\dagger) (\kappa_1 + \kappa_2) 
    \dec{g}f = (\kappa_1^\dagger + \kappa_2^\dagger) ((\dec{g}f) + (\dec{g}f))
    \dec{\mkern-3mu\dec{g}f} \\
    & = (\kappa_1^\dagger + \kappa_2^\dagger) ((\dec{g}f) + (\dec{g}f))
    \dec{gf} = (\kappa_1^\dagger + \kappa_2^\dagger) (\dec{g} + \dec{g}) (f+f)
    \dec{gf} \\
    & = ((\kappa_1^\dagger\dec{g}) + (\kappa_2^\dagger\dec{g})) (f+f)
    \dec{gf} = (\ridm{\kappa_1^\dagger g} + \ridm{\kappa_2^\dagger g}) (f+f)
    \dec{gf} \\
    & = (f+f) (\ridm{\kappa_1^\dagger gf} + \ridm{\kappa_2^\dagger gf})
    \dec{gf} 
    = (f+f) \ridm{[\ridm{\kappa_1^\dagger gf}, \ridm{\kappa_2^\dagger gf}]}
    \dec{gf} = (f+f) \ridm{\dec{gf}^\dagger} \dec{gf} \\
    & = (f+f) \dec{gf}
  \end{align*}
  where we use that $\kappa_i^\dagger\dec{g} = \ridm{\kappa_i^\dagger g}$ by
  \eqref{lem:9}, and $\dec{gf}^\dagger = [\ridm{\kappa_1^\dagger gf}, 
  \ridm{\kappa_2^\dagger gf}]$ by \eqref{lem:2}.
\end{proof}

\begin{proof}[Proof of Lemma~\ref{lem:condec}]
  That $\top = \kappa_1$ and $\bot = \kappa_2$ are decisions is shown in 
  \cite{Cockett2007}.
  That $\neg\dec{p} = \gamma\dec{p}$ is a decision follows by $\gamma\dec{p} =
  \dec{\gamma p}$ by Lemma~\ref{lem:utility}\eqref{lem:4}. To see that $\dec{p}
  \land \dec{q}$ is a decision, it suffices by Lemma~\ref{lem:decrep} to show 
  that $\kappa_1^\dagger (\dec{p} \land \dec{q}) = \ridm{\kappa_1^\dagger
  (\dec{p} \land \dec{q})}$ and $\kappa_2^\dagger (\dec{p} \land \dec{q}) = 
  \ridm{\kappa_2^\dagger (\dec{p} \land \dec{q})}$. We compute
  \begin{align*}
    \kappa_1^\dagger (\dec{p} \land \dec{q}) & =
    \kappa_1^\dagger (\id + \dec{p}^\dagger) \alpha (\dec{q} + \ridm{\dec{q}}) 
    \dec{p} \\
    & = \id \kappa_1^\dagger \alpha (\dec{q} + \ridm{\dec{q}}) \dec{p} \\
    & = \kappa_1^\dagger \alpha (\dec{q} + \ridm{\dec{q}}) \dec{p} \\
    & = \kappa_1^\dagger \kappa_1^\dagger (\dec{q} + \ridm{\dec{q}}) \dec{p} \\
    & = \kappa_1^\dagger \dec{q} \kappa_1^\dagger \dec{p} \\
    & = \ridm{\kappa_1^\dagger q}\, \ridm{\kappa_1^\dagger p}
  \end{align*}
  so $\kappa_1^\dagger (\dec{p} \land \dec{q}) = \ridm{\kappa_1^\dagger q}\,
  \ridm{\kappa_1^\dagger p} = \ridm{\ridm{\kappa_1^\dagger q}\,
  \ridm{\kappa_1^\dagger p}} = \ridm{\kappa_1^\dagger (\dec{p} \land \dec{q})}$.
  Further
  \begin{align*}
    \kappa_2^\dagger (\dec{p} \land \dec{q}) 
    & = \kappa_2^\dagger (\id + \dec{p}^\dagger) \alpha (\dec{q} +
    \ridm{\dec{q}}) \dec{p} \\
    & = \dec{p}^\dagger \kappa_2^\dagger \alpha (\dec{q} +
    \ridm{\dec{q}}) \dec{p} \\
    & = \dec{p}^\dagger (\kappa_2^\dagger + \id) (\dec{q} +
    \ridm{\dec{q}}) \dec{p} \\
    & = \dec{p}^\dagger ((\kappa_2^\dagger \dec{q}) +
    \ridm{\dec{q}})) \dec{p} \\
    & = \dec{p}^\dagger (\ridm{\kappa_2^\dagger q} +
    \ridm{q})) \dec{p} \\
    & \le \dec{p}^\dagger (\ridm{q} + \ridm{q})) \dec{p} \\
    & = \dec{p}^\dagger \dec{p} \ridm{q} \\
    & = \ridm{\dec{p}} \ridm{q} \\
    & = \ridm{p}\, \ridm{q}
  \end{align*}
  so since $\kappa_2^\dagger (\dec{p} \land \dec{q}) \le \ridm{p}\, \ridm{q}$
  it follows that $\kappa_2^\dagger (\dec{p} \land \dec{q}) =
  \ridm{\kappa_2^\dagger (\dec{p} \land \dec{q})}$, and so finally $\dec{p} 
  \land \dec{q}$ is a decision by Lemma~\ref{lem:decrep}. The case for $\dec{p} 
  \lor \dec{q}$ is entirely analogous.
\end{proof}

\begin{proof}[Proof of Lemma~\ref{lem:f_diam_homo}\eqref{lem:pres_conj}]
  By Lemma~\ref{lem:decrep} it suffices to show that $\kappa_1^\dagger
  f^\diamond(\dec{p} \land \dec{q}) = \kappa_1^\dagger (f^\diamond(\dec{p})
  \land f^\diamond(\dec{q}))$ and $\kappa_2^\dagger f^\diamond(\dec{p} \land
  \dec{q}) = \kappa_2^\dagger (f^\diamond(\dec{p}) \land f^\diamond(\dec{q}))$,
  Firstly we expand $f^\diamond(\dec{p} \land \dec{q}) = \dec{(\dec{p} \land
  \dec{q})f}$ and $f^\diamond(\dec{p}) \land f^\diamond(\dec{q}) =
  \dec{\mkern-3mu\dec{p}f} \land \dec{\mkern-3mu\dec{q}f} = \dec{pf} \land
  \dec{qf}.$ Then we compute
  \begin{align*}
    \kappa_1^\dagger (f^\diamond(\dec{p}) \land f^\diamond(\dec{q}) 
    & = \kappa_1^\dagger \dec{(\dec{p} \land \dec{q})f}
    = \ridm{\kappa_1^\dagger (\dec{p} \land \dec{q})f} \\
    & = \ridm{\kappa_1^\dagger (\id + \dec{p}^\dagger) \alpha (\dec{q} +
    \ridm{\dec{q}}) \dec{p} f} 
    = \ridm{\id \kappa_1^\dagger \alpha (\dec{q} +
    \ridm{\dec{q}}) \dec{p} f} \\
    & = \ridm{\kappa_1^\dagger \alpha (\dec{q} +
    \ridm{\dec{q}}) \dec{p} f} 
    = \ridm{\kappa_1^\dagger \dec{q} \kappa_1^\dagger \dec{p} f} \\
    & = \ridm{\ridm{\kappa_1^\dagger \dec{q}}\, \ridm{\kappa_1^\dagger \dec{p}} 
    f} 
    = \ridm{\ridm{\kappa_1^\dagger \dec{q}} f \ridm{\kappa_1^\dagger
    \dec{p}f}} \\
    & = \ridm{f \ridm{\kappa_1^\dagger \dec{q}f}\, \ridm{\kappa_1^\dagger
    \dec{p}f}} 
    = \ridm{\ridm{f}\, \ridm{\kappa_1^\dagger \dec{q}f}\,
    \ridm{\kappa_1^\dagger \dec{p}f}} \\
    & = \ridm{f}\, \ridm{\kappa_1^\dagger \dec{q}f}\,
    \ridm{\kappa_1^\dagger \dec{p}f} 
    = \ridm{\kappa_1^\dagger \dec{q}f} \ridm{f}\,
    \ridm{\kappa_1^\dagger \dec{p}f} \\
    & = \ridm{\kappa_1^\dagger \dec{q}f \ridm{f}}\,
    \ridm{\kappa_1^\dagger \dec{p}f} 
    = \ridm{\kappa_1^\dagger \dec{q}f}\,\ridm{\kappa_1^\dagger \dec{p}f}
  \end{align*}
  and
  \begin{align*}
    \kappa_1^\dagger (f^\diamond(\dec{p} \land f^\diamond(\dec{q})) 
    & = \kappa_1^\dagger \dec{pf} \land \dec{qf} 
    = \ridm{\kappa_1^\dagger (\dec{pf} \land \dec{qf})} \\
    & = \ridm{\kappa_1^\dagger (\id + \dec{pf}) \alpha 
    (\dec{qf}+\ridm{\dec{qf}}) \dec{pf}} 
    = \ridm{\id \kappa_1^\dagger \alpha (\dec{qf}+\ridm{\dec{qf}}) \dec{pf}}
    \\
    & = \ridm{\kappa_1^\dagger \alpha (\dec{qf}+\ridm{\dec{qf}}) \dec{pf}} 
    = \ridm{\kappa_1^\dagger \kappa_1^\dagger (\dec{qf}+\ridm{\dec{qf}}) 
    \dec{pf}} \\
    & = \ridm{\kappa_1^\dagger \dec{qf} \kappa_1^\dagger \dec{pf}} 
    = \ridm{\ridm{\kappa_1^\dagger \dec{qf}}\,\ridm{\kappa_1^\dagger
    \dec{pf}}} \\
    & = \ridm{\kappa_1^\dagger \dec{qf}}\,\ridm{\kappa_1^\dagger
    \dec{pf}}
  \end{align*}
  so $\kappa_1^\dagger (f^\diamond(\dec{p}) \land f^\diamond(\dec{q})) = 
  \kappa_1^\dagger (f^\diamond(\dec{p}) \land f^\diamond(\dec{q}))$. For the 
  second part,
  \begin{align*}
    \kappa_2^\dagger f^\diamond(\dec{p} \land \dec{q})
    & = \kappa_2^\dagger \dec{(\dec{p} \land \dec{q})f}
     = \ridm{\kappa_2^\dagger (\dec{p} \land \dec{q})f} \\
    & = \ridm{\kappa_2^\dagger (\id + \dec{p}^\dagger) \alpha (\dec{q} + 
    \ridm{\dec{q}}) \dec{p} f}
     = \ridm{\dec{p}^\dagger \kappa_2^\dagger \alpha (\dec{q} + 
    \ridm{\dec{q}}) \dec{p} f} \\
    & = \ridm{\dec{p}^\dagger (\kappa_2^\dagger + \id) (\dec{q} + 
    \ridm{\dec{q}}) \dec{p} f}
     = \ridm{\dec{p}^\dagger ((\kappa_2^\dagger \dec{q}) + 
    \ridm{\dec{q}}) \dec{p} f} \\
    & = \ridm{\ridm{\dec{p}^\dagger} (\ridm{\kappa_2^\dagger \dec{q}} + 
    \ridm{\dec{q}}) \dec{p} f}
     = \ridm{(\ridm{\kappa_2^\dagger \dec{q}} + 
    \ridm{\dec{q}}) \ridm{\dec{p}^\dagger} \dec{p} f} \\
    & = \ridm{(\ridm{\kappa_2^\dagger \dec{q}} + 
    \ridm{\dec{q}}) \dec{p} f}
     = \ridm{((\kappa_2^\dagger \dec{q}) + \dec{q}) \dec{p} f} \\
    & = \ridm{((\kappa_2^\dagger \dec{q}) + \dec{q}) \dec{p}\ridm{\dec{p}} f}
     = \ridm{((\kappa_2^\dagger \dec{q}\ridm{\dec{p}}) + 
    (\dec{q}\ridm{\dec{p}})) \dec{p} f} \\
    & = \ridm{((\kappa_2^\dagger \dec{q}\ridm{\dec{p}}f) + 
    (\dec{q}\ridm{\dec{p}}f)) \dec{pf}} \\
  \end{align*}
  and
  \begin{align*}
    \kappa_2^\dagger (f^\diamond(\dec{p}) \land f^\diamond(\dec{q}))
    & = \kappa_2^\dagger (\dec{pf} \land \dec{qf})
    = \ridm{\kappa_2^\dagger (\dec{pf} \land \dec{qf})} \\
    & = \ridm{\kappa_2^\dagger (\id + \dec{pf}^\dagger) \alpha (\dec{qf} + 
    \ridm{\dec{qf}})\dec{pf}}
    = \ridm{\dec{pf}^\dagger \kappa_2^\dagger \alpha (\dec{qf} + 
    \ridm{\dec{qf}})\dec{pf}} \\
    & = \ridm{\dec{pf}^\dagger (\kappa_2^\dagger + \id) (\dec{qf} + 
    \ridm{\dec{qf}})\dec{pf}}
    = \ridm{\dec{pf}^\dagger ((\kappa_2^\dagger\dec{qf}) + 
    \ridm{\dec{qf}})\dec{pf}} \\
    & = \ridm{\ridm{\dec{pf}^\dagger} (\ridm{\kappa_2^\dagger\dec{qf}} + 
    \ridm{\dec{qf}})\dec{pf}}
    = \ridm{(\ridm{\kappa_2^\dagger\dec{qf}} + 
    \ridm{\dec{qf}})\ridm{\dec{pf}^\dagger} \dec{pf}} \\
    & = \ridm{(\ridm{\kappa_2^\dagger\dec{qf}} + 
    \ridm{\dec{qf}}) \dec{pf}}
    = \ridm{(\ridm{\kappa_2^\dagger\dec{q}f} + 
    \ridm{\dec{q}f}) \dec{pf}} \\
    & = \ridm{((\kappa_2^\dagger\dec{q}f) + 
    (\dec{q}f)) \dec{pf}}
    = \ridm{((\kappa_2^\dagger\dec{q}f) + 
    (\dec{q}f)) \dec{pf} \ridm{\dec{pf}}} \\
    & = \ridm{((\kappa_2^\dagger\dec{q}f) + 
    (\dec{q}f)) \dec{pf} \ridm{\dec{p}f}}
    = \ridm{((\kappa_2^\dagger\dec{q}f\ridm{\dec{p}f}) + 
    (\dec{q}f\ridm{\dec{p}f})) \dec{pf}} \\
    & = \ridm{((\kappa_2^\dagger\dec{q}\ridm{\dec{p}}f) + 
    (\dec{q}\ridm{\dec{p}}f)) \dec{pf}}
  \end{align*}
  so also $\kappa_2^\dagger f^\diamond(\dec{p} \land \dec{q}) = 
  \kappa_2^\dagger (f^\diamond(\dec{p}) \land f^\diamond(\dec{q}))$, which 
  finally gives us $f^\diamond(\dec{p} \land \dec{q}) = f^\diamond(\dec{p}) 
  \land f^\diamond(\dec{q})$.
\end{proof}

\begin{proof}[Proof of \ref{lem:ridmdec}]
  \eqref{lem:ridmdec:1} $\ridm{\neg\dec{p}} = \ridm{\gamma \dec{p}} =
  \ridm{\ridm{\gamma} \dec{p}} = \ridm{\id \dec{p}} = \ridm{\dec{p}}$.
  
  \eqref{lem:ridmdec:2} We have
  \begin{align*}
    \ridm{\dec{p} \land \dec{q}} 
    & = \ridm{(\id + \dec{p}^\dagger) \alpha (\dec{q} + \ridm{\dec{q}}) 
    \dec{p}} \\
    & = \ridm{(\id + \nabla) \alpha (\dec{q} + \ridm{\dec{q}}) 
    \dec{p}} \\
    & = \ridm{\ridm{(\id + \nabla) \alpha} (\dec{q} + \ridm{\dec{q}}) 
    \dec{p}} \\
    & = \ridm{\id (\dec{q} + \ridm{\dec{q}}) 
    \dec{p}} \\
    & = \ridm{(\dec{q} + \ridm{\dec{q}}) \dec{p}} \\
    & = \ridm{(\ridm{\dec{q}} + \ridm{\dec{q}}) \dec{p}} \\
    & = \ridm{\dec{p} \ridm{\dec{q}}} \\
    & = \ridm{\dec{p}}\, \ridm{\dec{q}}
  \end{align*}
  \eqref{lem:ridmdec:3} follows by analogous reasoning to
  \eqref{lem:ridmdec:2}. We have \eqref{lem:ridmdec:4} immediately by
  \eqref{lem:ridmdec:2} since $\ridm{\dec{p} \land \dec{q}} =
  \ridm{\dec{p}}\,\ridm{\dec{q}} \le \ridm{\dec{p}}$ directly, and likewise
  $\ridm{\dec{p} \land \dec{q}} = \ridm{\dec{p}}\,\ridm{\dec{q}} \le
  \ridm{\dec{q}}$. \eqref{lem:ridmdec:5} follows analogously.
\end{proof}

\begin{proof}[Proof of Lemma~\ref{lem:entailment}]
  Assume $\kappa_1^\dagger \dec{p} \le \kappa_1^\dagger \dec{q}$ and
  $\ridm{\dec{p}} \le \ridm{\dec{q}}$. By Lemma~\ref{lem:decrep}, to show
  $\dec{p} \land \dec{q} = \dec{p}$ (\ie, $\dec{p} \entails \dec{q}$) it
  suffices to show that $\kappa_1^\dagger (\dec{p} \land \dec{q}) =
  \kappa_1^\dagger \dec{p}$ and $\kappa_2^\dagger (\dec{p} \land \dec{q}) =
  \kappa_2^\dagger \dec{p}$.
  
  Since $$\kappa_1^\dagger (\dec{p} \land \dec{q}) = \ridm{\kappa_1^\dagger
  p}\,\ridm{\kappa_1^\dagger q} = \ridm{\kappa_1^\dagger
  \dec{p}}\,\ridm{\kappa_1^\dagger \dec{q}} = \ridm{\kappa_1^\dagger \dec{p}} =
  \kappa_1^\dagger \dec{p}$$ by the proof of Lemma~\ref{lem:condec},
  $\ridm{\kappa_1^\dagger \dec{p}} \le \ridm{\kappa_1^\dagger \dec{q}}$, and 
  Lemma~\ref{lem:utility}, proving the first part. For the second part,
  \begin{align*}
    \kappa_2^\dagger (\dec{p} \land \dec{q})
    & = \ridm{\kappa_2^\dagger (\dec{p} \land \dec{q})} 
     = \ridm{\dec{p}^\dagger (\ridm{\kappa_2^\dagger \dec{q}} + 
    \ridm{\dec{q}}) \dec{p}} \\
    & = \ridm{\ridm{\dec{p}^\dagger} (\ridm{\kappa_2^\dagger \dec{q}} + 
    \ridm{\dec{q}}) \dec{p}}
     = \ridm{(\ridm{\kappa_1^\dagger \dec{p}}+\ridm{\kappa_2^\dagger \dec{p}})
    (\ridm{\kappa_2^\dagger \dec{q}} + \ridm{\dec{q}}) \dec{p}} \\
    & = \ridm{((\ridm{\kappa_1^\dagger \dec{p}}\,\ridm{\kappa_2^\dagger
    \dec{q}}) + (\ridm{\kappa_2^\dagger \dec{p}}\,\ridm{\dec{q}})) \dec{p}}
     = \ridm{((\ridm{\kappa_1^\dagger \dec{p}}\,\ridm{\kappa_1^\dagger
    \dec{q}}\,\ridm{\kappa_2^\dagger \dec{q}}) + (\ridm{\kappa_2^\dagger
    \dec{p}}\,\ridm{\dec{q}})) \dec{p}} \\
    & = \ridm{((\ridm{\kappa_1^\dagger \dec{p}}\,0) + (\ridm{\kappa_2^\dagger
    \dec{p}}\,\ridm{\dec{q}})) \dec{p}}
     = \ridm{(0 + (\ridm{\kappa_2^\dagger \dec{p}}\,\ridm{\dec{q}})) \dec{p}}\\
    & = \ridm{(0 + (\ridm{\kappa_2^\dagger \dec{p}\ridm{\dec{q}}})) \dec{p}}
     = \ridm{(0 + (\ridm{\kappa_2^\dagger \dec{p} \ridm{\dec{p}}\,
    \ridm{\dec{q}}})) \dec{p}} \\
    & = \ridm{(0 + (\ridm{\kappa_2^\dagger \dec{p} \ridm{\dec{p}}})) \dec{p}}
     = \ridm{(0 + (\ridm{\kappa_2^\dagger \dec{p}})) \dec{p}} \\
    & = \ridm{((0\,\ridm{\kappa_1^\dagger \dec{p}}) + (\ridm{\kappa_2^\dagger 
    \dec{p}})) \dec{p}}
     = \ridm{(0 + \id) (\ridm{\kappa_1^\dagger \dec{p}} +
    \ridm{\kappa_2^\dagger \dec{p}}) \dec{p}} \\
    & = \ridm{(0 + \id) \ridm{\dec{p}^\dagger} \dec{p}}
     = \ridm{(0 + \id) \dec{p}}
     = \ridm{\ridm{\kappa_2^\dagger} \dec{p}}
     = \ridm{\kappa_2^\dagger \dec{p}}
     = \kappa_2^\dagger \dec{p}
  \end{align*}
  In the other direction, suppose that $\dec{p} \entails \dec{q}$, \ie,
  $\dec{p} \land \dec{q} = \dec{p}$. Then $\kappa_1^\dagger (\dec{p} \land
  \dec{q}) = \kappa_1^\dagger \dec{p} = \ridm{\kappa_1^\dagger \dec{p}}$, but
  since we also know that $\kappa_1^\dagger (\dec{p} \land \dec{q}) =
  \ridm{\kappa_1^\dagger \dec{p}}\,\ridm{\kappa_1^\dagger \dec{q}}$ (see
  above), it follows that $$\kappa_1^\dagger \dec{q}\ridm{\kappa_1^\dagger
  \dec{p}} = \ridm{\kappa_1^\dagger \dec{q}}\,\ridm{\kappa_1^\dagger \dec{p}} =
  \ridm{\kappa_1^\dagger \dec{p}}\,\ridm{\kappa_1^\dagger \dec{q}} =
  \ridm{\kappa_1^\dagger \dec{p}} = \kappa_1^\dagger \dec{p}$$ that is,
  $\kappa_1^\dagger \dec{p} \le \kappa_1^\dagger \dec{q}$. That $\ridm{\dec{p}}
  \le \ridm{\dec{q}}$ follows by Lemma~\ref{lem:ridmdec}, as we thus have
  $\ridm{\dec{p}} = \ridm{\dec{p} \land \dec{q}} \le \ridm{\dec{q}}$.
\end{proof}

\begin{proof}[Proof of Lemma~\ref{lem:dec_misc}]
  To prove \eqref{lem:commstmt}, it suffices to show that their partial
  inverses are equal, since partial inverses are unique. We show this as 
  follows:
  \ctikzfig{commutativity}
  \eqref{lem:con_irrev} follows by
  \ctikzfig{expression_lemma2}
  \eqref{lem:dis_irrev} follows by analogous
  argument to \eqref{lem:con_irrev}.
\end{proof}

\begin{proof}[Proof of Lemma~\ref{lem:dec_dmq}]
\noindent
Commutativity of conjunction follows by
\ctikzfig{con_comm}
and commutativity of disjunction analogously. Associativity of conjunction is demonstrated by
\ctikzfig{con_assoc}
and associativity of disjunction can be shown similarly. For distributivity of conjunction over disjunction,
\ctikzfig{con_dist}
and the dual distributive law follows symmetrically. Finally, the double negation law then follows simply by
\ctikzfig{dne}
which concludes the proof.
\end{proof}
\fi


\end{document}